\theoremstyle{plain}
\newtheorem{thm}{\protect\theoremname}
  \theoremstyle{plain}
  \newtheorem{cor}[thm]{\protect\corollaryname}
  \theoremstyle{plain}
  \newtheorem{prop}[thm]{\protect\propositionname}
  \theoremstyle{plain}
  \newtheorem*{prop*}{\protect\propositionname}
  \providecommand{\corollaryname}{Corollary}
  \providecommand{\propositionname}{Proposition}
\providecommand{\theoremname}{Theorem}
\begin{document}
\global\long\def\ba{\bm{\alpha}}
 \global\long\def\bmu{\bm{\mu}}
 \global\long\def\bl{\bm{\lambda}}
 \global\long\def\be{\bm{\eta}}
 \global\long\def\bt{\bm{\theta}}
 \global\long\def\E{\mathbb{E}}
 \global\long\def\by{\bm{y}}
 \global\long\def\ys{\bm{y}^{\star}}
\global\long\def\xs{\mathbf{x}{}^{\star}}
 \global\long\def\bS{\bm{\Sigma}}
 \global\long\def\N{\mathcal{N}}
 \global\long\def\I{\mathbb{I}}
 \global\long\def\R{\mathbb{R}}
 \global\long\def\bb{{\bf \beta}}
\global\long\def\O{\mathcal{O}}
\global\long\def\Rc{\mathcal{R}}
\global\long\def\Vc{\mathcal{V}}
\global\long\def\bG{\bm{\Gamma}}
\global\long\def\bD{\bm{\Delta}}

\title{Fast matrix computations for functional additive models }

\author{Simon Barthelmé%
\thanks{University of Geneva, Department of Basic Neuroscience. simon.barthelme@unige.ch%
}}
\maketitle
\begin{abstract}
It is common in functional data analysis to look at a set of related
functions: a set of learning curves, a set of brain signals, a set
of spatial maps, etc. One way to express relatedness is through an
additive model, whereby each individual function $g_{i}\left(x\right)$
is assumed to be a variation around some shared mean $f(x)$. Gaussian
processes provide an elegant way of constructing such additive models,
but suffer from computational difficulties arising from the matrix
operations that need to be performed. Recently Heersink \& Furrer
have shown that functional additive model give rise to covariance
matrices that have a specific form they called \emph{quasi-Kronecker}
(QK), whose inverses are relatively tractable. We show that under
additional assumptions the two-level additive model leads to a class
of matrices we call \emph{restricted quasi-Kronecker }(rQK)\emph{,
}which enjoy many interesting properties. In particular, we formulate
matrix factorisations whose complexity scales only linearly in the
number of functions in latent field, an enormous improvement over
the cubic scaling of naïve approaches. We describe how to leverage
the properties of rQK matrices for inference in Latent Gaussian Models.
\end{abstract}

\section{Introduction\label{sec:Introduction}}

One of the most frequent scenarios in functional data analysis is
that one has a group of related functions (for example, learning curves,
growth curves, EEG traces, etc.) and the goal is to describe what
these functions have in common and how they vary \citep{RamsaySilverman:FunctionalDataAnalysis,Behseta:HierarchModelsVariabFunctions,Cheng:BayesianRegistFunctionsCurves}.
In that context functional additive models are very natural, and the
simplest is the following two-level model:

\begin{equation}
g_{i}\left(x\right)=f\left(x\right)+d_{i}\left(x\right)\label{eq:two-level-model}
\end{equation}

where functions $g_{1}\left(x\right),\ldots,g_{m}\left(x\right)$
are $m$ individual functions (e.g. learning curves for $m$ different
individuals), $f\left(x\right)$ is a mean function and $d_{1}\left(x\right)\ldots d_{n}\left(x\right)$
describe individual differences. 

The two-level model appears on its own, or as an essential building
block in more complex models, which may include several hierarchical
levels (as in functional ANOVA, \citealp{RamsaySilverman:FunctionalDataAnalysis,KaufmanSain:BayesianFuncANOVA,Sain:fANOVAandRegClimateExperiments}),
time shifts \citep{Cheng:BayesianRegistFunctionsCurves,KneipRamsay:CombiningRegistrationFitting,TelescaInoue:BayesHierarchCurveReg},
or non-Gaussian observations \citep{Behseta:HierarchModelsVariabFunctions}.
Functional PCA \citep{RamsaySilverman:FunctionalDataAnalysis}, which
seeks to characterise the distribution of the difference functions
$d_{i}$, is a closely related variant . In a Bayesian framework the
two-level model can be expressed as a particular kind of Gaussian
process prior \citep{KaufmanSain:BayesianFuncANOVA}. The goal of
this paper is to show that the covariance matrices that arise in the
two-level model have a form that lends itself to very efficient computation.
We call these matrices \emph{restricted quasi-Kronecker}, after \citet{HeersinkFurrer:MoorePenroseInversesQuasiKronecker}. 

The paper is structured as follows. We first give some background
material on Gaussian processes and latent Gaussian models \citep{Rue:INLA},
and describe the particular covariance matrices that obtain in functional
additive models. These matrices are \emph{quasi-Kronecker }(QK) or
\emph{restricted quasi-Kronecker }(rQK)\emph{, }and in the next section
we prove some theoretical results that follow from the \emph{block-rotated
}form of rQK matrices. In particular, rQK matrices have very efficient
factorisations, and we detail a range of useful computations based
on these factorisations. In the following section we apply our results
to Gaussian data (joint smoothing), and show that marginal likelihoods
and their derivatives are highly tractable. The other application
we highlight concerns the modelling of spike trains, which leads to
a latent Gaussian model with Poisson likelihood. We show that the
Hessian matrices of the log-posterior in the two-level model are quasi-Kronecker,
which makes the Laplace approximation and its derivative tractable.
This leads to efficient approximate inference methods for large-scale
functional additive models.

\subsection{Notation}

The Kronecker product of $\mathbf{A}$ and $\mathbf{B}$ is denoted
$\mathbf{A}\otimes\mathbf{B}$, the all-one vector of length $n$
$\mathbf{e}_{n}$ or $\mathbf{e}$ if obvious from the context. Throughout
$m$ is used for the number of individual functions in the functional
additive model, and $n$ for the number of grid points. 

In what follows we will use the following two properties of the Kronecker
product and vec operators \citep{PetersenPedersen:MatrixCookbook}:

\begin{equation}
\left(\mathbf{A}\otimes\mathbf{B}\right)\left(\mathbf{C}\otimes\mathbf{D}\right)=\left(\mathbf{AC}\otimes\mathbf{BD}\right)\label{eq:kron-prod}
\end{equation}

for compatible matrices, and 

\begin{equation}
\mbox{vec}\left(\mathbf{AXB}\right)=\left(\mathbf{B}^{t}\otimes\mathbf{A}\right)\mbox{vec}\left(\mathbf{X}\right)\label{eq:vec-kron}
\end{equation}

where the $\mbox{vec}\left(\mathbf{X}\right)$ operator stacks the
columns of $\mathbf{X}$ vertically.

\subsection{Gaussian process and latent Gaussian models\label{sub:Gaussian-process-intro}}

We give here a brief description of Gaussian processes (GPs), a much
more detailed treatment is available in \citet{RasmussenWilliamsGP}.
A GP with mean 0 and covariance function $k(x,x')$ is a distribution
on the space of functions of some input space $\mathcal{X}$ into
$\mathbb{R}$, such that for every set of sampling points $\left\{ x_{1},\ldots,x_{n}\right\} $,
the sampled values $f(x_{1}),\ldots,f(x_{n})$ follow a multivariate
Gaussian distribution

\begin{eqnarray*}
f(x_{1}),\ldots,f(x_{n}) & \sim & \N\left(0,\mathbf{K}\right)\\
\mathbf{K}_{ij} & = & k(x_{i},x_{j})
\end{eqnarray*}

As a shorthand for the above, we will use the notation $f\sim GP\left(0,\kappa\right)$
in what follows. The covariance function usually expresses the idea
that we expect the values of $f$ at $x_{i}$ and $x_{j}$ to be close
if $x_{i}$ and $x_{j}$ are themselves close. Often, covariance functions
only depend on the distance between two points, so that $k\left(x_{i},x_{j}\right)=\kappa\left(d\left(x_{i},x_{j}\right)\right)$,
with $d\left(x_{i},x_{j}\right)$ some distance function (usually
Euclidean). 

In most actual cases the covariance function is not known and involves
two or more hyperparameters, which we will call $\bt$. In machine
learning the squared-exponential covariance function is especially
popular:

\begin{equation}
\kappa_{\bt}\left(d\right)=\exp\left(-e^{-\theta_{1}}\frac{d^{2}}{2}+\theta_{2}\right)\label{eq:covfun-squared-exp}
\end{equation}

where $\theta_{1}$ is a (log) length-scale parameter and $\theta_{2}$
controls the marginal variance of the process. For reasons of numerical
stability we prefer to use a Matern 5/2 covariance function, which
generates rougher functions but leads to covariance matrices that
are better behaved. The Matern 5/2 covariance function has the following
form:

\begin{eqnarray}
\kappa_{\bt}\left(u\right) & = & \left(1+u+\frac{u^{2}}{3}\right)\exp\left(-u+\theta_{2}\right)\label{eq:covfun-matern52}\\
u & = & \sqrt{5}d\times e^{-\theta_{1}}\nonumber 
\end{eqnarray}

given in \citet{RasmussenWilliamsGP}, page 85. The parameters $\theta_{1}$
and $\theta_{2}$ play similar roles in this formulation, controlling
length-scale and marginal variance.

Gaussian processes can be used to formulate priors for Latent Gaussian
Models \citep{Rue:INLA}. The term ``Latent Gaussian Model'' describes
a very large class of statistical models, one that encompasses for
example all Generalised Linear Models (with Gaussian priors on the
coefficients), Generalised Mixed Models, and Generalised Additive
Models. The two main components are (a) a latent Gaussian field, $\mathbf{z}=\left[z_{1}\ldots z_{n}\right]$,
with prior $\mathbf{z}\sim\N\left(0,\bS_{\bt}\right)$ and (b) independently
distributed data $y_{i}\sim f(y|z_{i})$, which depend on the corresponding
value of the latent field. In regression models $f(y|z_{i})$ is Gaussian,
but other common distributions include a logit model (for binary $y$)
and exponential-Poisson (as in section \ref{sub:Spike-train-data}).
The results we give here apply to all Latent Gaussian Models where
the latent field has the structure of a two-level functional additive
model, which we detail next. 

\begin{figure}

\begin{centering}
\includegraphics[width=8cm]{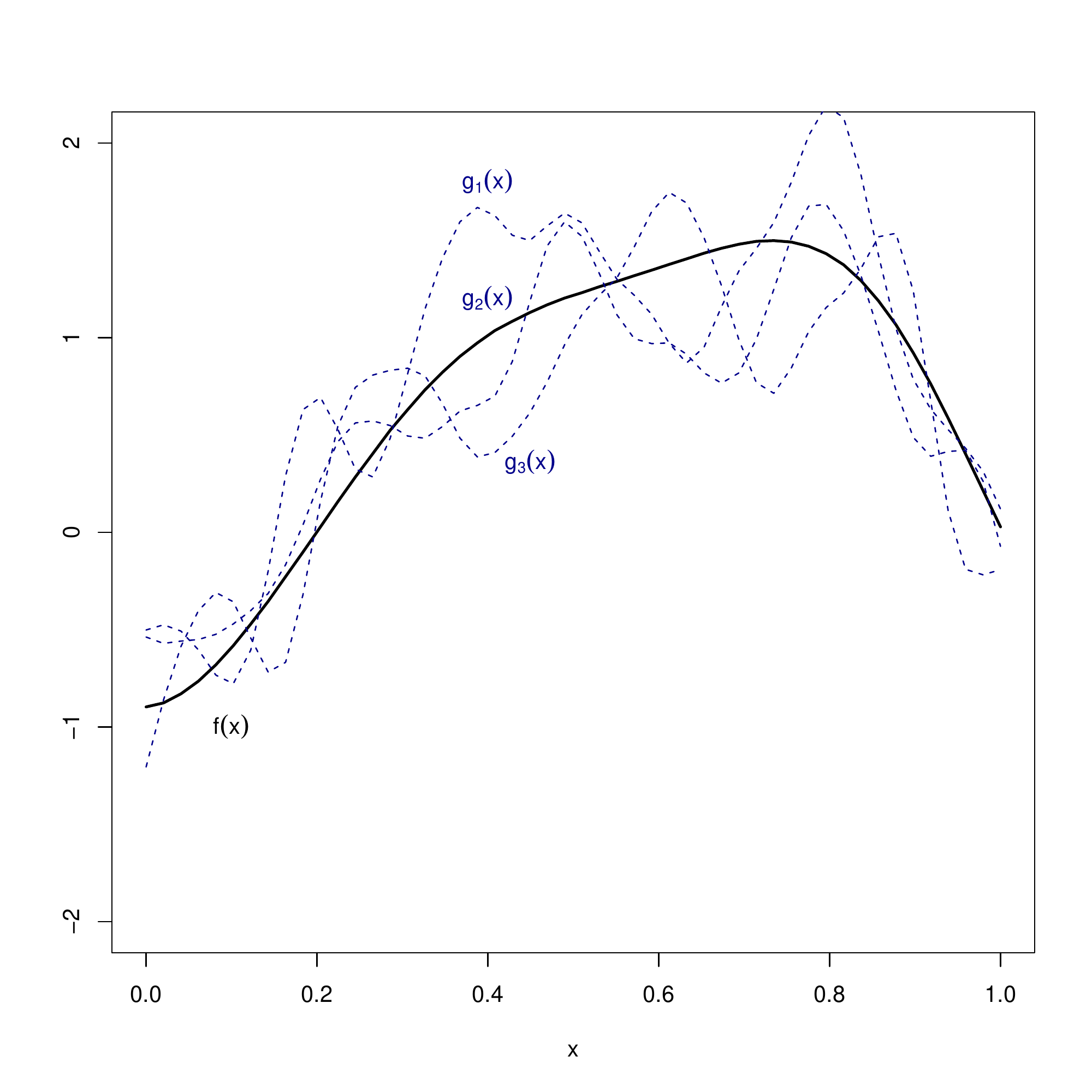}
\par\end{centering}

\caption{The functional additive model (eq. \ref{eq:two-level-model-GP}) produces
sets of related functions. Here we show a latent mean function ($f(x)$),
sampled from a Gaussian process. The individual functions $g_{1},g_{2},g_{3}$
are variations on $f$ sampled from a Gaussian process with mean function
$f(x)$.\label{fig:illus-functional-additive-model}}

\end{figure}

\subsection{Quasi-Kronecker structures in functional additive models\label{sub:Quasi-Kronecker-structures}}

In a Gaussian process context the two-level functional additive model
translates into the following model:

\begin{eqnarray}
f & \sim & GP\left(0,\kappa\right)\nonumber \\
g_{1}\ldots g_{m}\vert f & \sim & GP\left(f,\kappa'\right)\label{eq:two-level-model-GP}
\end{eqnarray}

where individual $g_{i}$'s are conditionally independent given $f$
(they are however marginally dependent). 

Fig. \ref{fig:illus-functional-additive-model} gives an example of
a draw from such a model. Eq. \ref{eq:two-level-model-GP} expresses
the model as a prior over functions, but if we have a set of sampling
locations $x_{1}\ldots x_{n}$ the Gaussian process assumption implies
a multivariate Gaussian model for the latent field at the sample locations.
We assume throughout that there are $n\times m$ observations, corresponding
to $m$ realisations of a latent process observed on a grid of size
$n$, with grid points $x_{1}\ldots x_{n}$. The grid may be irregular
(i.e. $x_{t+1}-x_{t}$ needs not equal a constant), but we need the
grid to be constant over individual functions, since otherwise the
covariance matrix does not have the requisite structure. Denote by
$\mathbf{g}_{i}$ the vector of sampled values from $g_{i}(x)$: $\mathbf{g}_{i}=\left[g_{1}\left(x_{1}\right)\ldots g_{n}\left(x_{n}\right)\right]^{t}$.
The latent field $G\in\R^{n\times m}$ is formed of the concatenation
of $\mathbf{g}_{1}\ldots\mathbf{g}_{m}$, and we may think of it either
as a $n\times m$ matrix $\mathbf{G}$ or a single vector $\mbox{vec}\left(\mathbf{G}\right)$.
Following the LGM framework, we assume that the data are (possibly
non-linear, non-Gaussian) observations based on the latent values,
so that $y_{ij}\sim f(y|g_{ij})$.

Under a particular grid, sampled values from the mean function $\mathbf{f}$
have distribution $\mathbf{f}\sim\N\left(0,\mathbf{K}\right)$, and
$\mathbf{g}_{i}$ has conditional distribution $\mathbf{g}_{i}|\mathbf{f}\sim\N\left(\mathbf{f},\mathbf{A}\right)$.
Draws from the latent field $\mathbf{g}$ can therefore be obtained
by the following transformation:

\begin{eqnarray}
\mathbf{g} & = & \left[\begin{array}{ccccc}
\mathbf{I} & \mathbf{I}\\
\mathbf{I} &  & \mathbf{I}\\
\vdots &  &  & \ddots\\
\mathbf{I} &  &  &  & \mathbf{I}
\end{array}\right]\left[\begin{array}{c}
\mathbf{f}\\
\mathbf{d}_{1}\\
\vdots\\
\mathbf{d}_{m}
\end{array}\right]\nonumber \\
 & = & \mathbf{M}\left[\begin{array}{c}
\mathbf{f}\\
\mathbf{d}_{1}\\
\vdots\\
\mathbf{d}_{m}
\end{array}\right]\label{eq:generating-latent-field}
\end{eqnarray}

where $\mathbf{f}\sim\N\left(0,\mathbf{K}\right)$ and $\mathbf{d}_{i}\sim\N\left(0,\mathbf{A}\right)$.
We can use eq. \ref{eq:generating-latent-field} to find the marginal
distribution of $\mathbf{g}$ (unconditional on $\mathbf{f}$). Using
standard properties of Gaussian random variables we find:

\begin{eqnarray}
\mathbf{g} & \sim & \N\left(0,\bS\right)\nonumber \\
\bS & = & \mathbf{M}\left[\begin{array}{cccc}
\mathbf{K}\\
 & \mathbf{A}\\
 &  & \ldots\\
 &  &  & \mathbf{A}
\end{array}\right]\mathbf{M}^{t}\nonumber \\
 & = & \left[\begin{array}{cccc}
\mathbf{A}+\mathbf{K} & \mathbf{K} & \cdots & \mathbf{K}\\
\mathbf{K} & \mathbf{A}+\mathbf{K} & \cdots & \mathbf{K}\\
\vdots & \vdots & \ddots & \vdots\\
\mathbf{K} & \mathbf{K} & \cdots & \mathbf{A}+\mathbf{K}
\end{array}\right]\label{eq:marginal-g}
\end{eqnarray}

$\bS$ is a dense matrix, whose dimensions ($nm\times nm$) would
seem to preclude direct inference, since the costs of factorising
such a matrix would be too high with numbers as low as 100 grid points
and 20 observed functions. \citet{HeersinkFurrer:MoorePenroseInversesQuasiKronecker}
have shown that the matrix inverse of $\bS$ is actually surprising
tractable, and in the following section we summarise and extend their
results.

\section{Some properties of restricted quasi-Kronecker matrices\label{sec:Some-properties-rQK}}

Quasi-Kronecker (QK) matrices are introduced in \citet{HeersinkFurrer:MoorePenroseInversesQuasiKronecker}
and have the following form:

\begin{equation}
\bS=\mbox{bdiag}\left(\mathbf{A}_{1}\ldots\mathbf{A}_{m}\right)+\mathbf{uv}^{t}\otimes\mathbf{K}\label{eq:definition-QK}
\end{equation}

We focus mostly on the restricted case (rQK), i.e. matrices of the
form 

\begin{equation}
\bS=\mbox{bdiag}\left(\mathbf{A}\ldots\mathbf{A}\right)+\mathbf{ee}^{t}\otimes\mathbf{K}=\mathbf{I}_{m}\otimes\mathbf{A}+\mathbf{ee}^{t}\otimes\mathbf{K}=\mbox{rQK}\left(\mathbf{A},\mathbf{K}\right)\label{eq:rQK-definition}
\end{equation}

which arise in the functional additive model described just above
(eq. \ref{eq:marginal-g}).

QK and especially rQK matrices have a number of properties that make
them much more tractable than dense, unstructured matrices of the
same size.

\subsection{The general (unrestricted) case}

The cost of matrix-vector products with QK matrices is $\O\left(n^{2}m\right)$,
as compared to $\O\left(n^{2}m^{2}\right)$ in the general case. This
follows directly from the definition (eq. \ref{eq:definition-QK}),
as we need only perform $\O\left(m\right)$ operations of complexity
$\O\left(n^{2}\right)$.

\citet{HeersinkFurrer:MoorePenroseInversesQuasiKronecker} showed
that the inverse and pseudo-inverse of QK matrices is also tractable.
For the inverse the following formula applies:

\begin{eqnarray}
\left(\mbox{bdiag}\left(\mathbf{A}_{1}\ldots\mathbf{A}_{m}\right)+\mathbf{uv}^{t}\otimes\mathbf{K}\right)^{-1} & = & \mbox{bdiag}\left(\mathbf{A}_{i}^{-1}\right)-\mbox{bdiag}\left(\mathbf{A}_{i}^{-1}\right)\left(\mathbf{u}\otimes\mathbf{L}_{\mathbf{k}}\right)\mathbf{P}^{-1}\left(\mathbf{v}\otimes\mathbf{L}_{\mathbf{k}}\right)^{t}\mbox{bdiag}\left(\mathbf{A}_{i}^{-1}\right)\label{eq:QK-inverse}\\
\mathbf{P} & = & \left(\mathbf{I}_{m}+\left(\mathbf{v}\otimes\mathbf{L}_{\mathbf{k}}\right)^{t}\mbox{bdiag}\left(\mathbf{A}_{i}^{-1}\right)\left(\mathbf{u}\otimes\mathbf{L}_{k}\right)\right)\nonumber 
\end{eqnarray}

where $\mathbf{K=}\mathbf{L}_{\mathbf{k}}\mathbf{L}_{\mathbf{k}}^{t}$.
The result can be proved using the Sherman-Woodbury-Morrison formula,
by writing $\mathbf{uv}^{t}\otimes\mathbf{K}=\left(\mathbf{u}\otimes\mathbf{L}_{\mathbf{k}}\right)\left(\mathbf{v}\otimes\mathbf{L}_{\mathbf{k}}\right)^{t}$,
which follows from eq. (\ref{eq:kron-prod}). It implies that a $mn\times mn$
QK matrix can be inverted in $\O\left(n^{3}m+m^{3}\right)$ operations,
as opposed to $\O\left(m^{3}n^{3}\right)$ in the general case. A
similar formula holds for the determinant:

\begin{equation}
\mbox{det}\left(\mbox{bdiag}\left(\mathbf{A}_{1}\ldots\mathbf{A}_{m}\right)+\mathbf{uv}^{t}\otimes\mathbf{K}\right)=\mbox{det}\left(\mbox{bdiag}\left(\mathbf{A}_{1}\ldots\mathbf{A}_{m}\right)\right)\det\left(\mathbf{P}\right)=\prod\det\left(\mathbf{A}_{i}\right)\det\left(\mathbf{P}\right)\label{eq:determinant-QK}
\end{equation}

where $\mathbf{P}$ is as in equation \ref{eq:QK-inverse}. The result
implies that determinants of QK matrices can be computed in $\O\left(n^{3}m+m^{3}\right)$
operations ($\O\left(m^{3}n^{3}\right)$ in the general case). We
show in section \ref{sub:Spike-train-data} that it can be used to
speed up Laplace approximations in latent Gaussian models \citep{Rue:INLA}.

\subsection{The restricted case}

rQK matrices have some additional properties not enjoyed by general
QK matrices. For example, the product of two rQK matrices is another
rQK matrix:

\begin{equation}
\mbox{rQK}\left(\mathbf{A}_{1},\mathbf{K}_{1}\right)\mbox{rQK}\left(\mathbf{A}_{2},\mathbf{K}_{2}\right)=\mbox{rQK}\left(\mathbf{A}_{1}\mathbf{A}_{2},\mathbf{A}_{1}\mathbf{K}_{2}+\mathbf{K}_{1}\mathbf{A}_{2}+m\mathbf{K}_{1}\mathbf{K}_{2}\right)\label{eq:prod-two-rQKs}
\end{equation}

We use this property below to compute the gradient of the marginal
likelihood in Gaussian models (section \ref{sub:gaussian-marginal-likelihood}).
In addition, several other properties follow from the block-rotated
form of rQK matrices: for example, the inverse of a rQK matrix is
another rQK matrix, and the eigenvalue decomposition has a very  structure.
We prove these results next.

\subsection{Block-rotated form of rQK matrices\label{sub:Block-rotated-form}}

In this section we first establish that rQK matrices can be \emph{block-rotated
}into a block-diagonal form. From this result the eigendecomposition
follows immediately, and so does a Cholesky-based square root. These
decompositions need not be formed explicitly, and we will see that
their cost scales as $\mathcal{O}\left(n^{2}\right)$ in storage and
$\mathcal{O}\left(mn^{3}\right)$ in time. The latter is linear in
$m$, compared to cubic for naïve algorithms.

We begin by defining block rotations, which are straightforward extensions
of block permutations. A block permutation can be written using the
Kronecker product $\mathbf{P}\otimes\mathbf{I}_{n}$ of a $m\times m$
permutation matrix $\mathbf{P}$ and the $n\times n$ identity matrix.
Applied to a matrix \textbf{$\mathbf{M}$}, $\left(\mathbf{P}\otimes\mathbf{I}_{n}\right)\mathbf{M}\left(\mathbf{P}^{t}\otimes\mathbf{I}_{n}\right)$
will permute blocks of size $n\times n$. Block rotations are defined
in a similar way, through the Kronecker product $\mathbf{R}=\mathbf{B}\otimes\mathbf{I}_{n}$
of a $m\times m$ rotation matrix $\mathbf{B}$ and the $n\times n$
identity matrix. A block rotation matrix is an orthogonal matrix,
since:

\begin{eqnarray*}
\left(\mathbf{B}\otimes\mathbf{I}_{n}\right)^{t}\left(\mathbf{B}\otimes\mathbf{I}_{n}\right) & = & \left(\mathbf{B}^{t}\otimes\mathbf{I}_{n}\right)\left(\mathbf{B}\otimes\mathbf{I}_{n}\right)\\
 & = & \left(\mathbf{B}^{t}\mathbf{B}\otimes\mathbf{I}_{n}\right)\\
 & = & \mathbf{I}_{mn}
\end{eqnarray*}

where the second line follows from eq. (\ref{eq:kron-prod}). 

Our main result is the following:
\begin{thm}
There exists an orthogonal matrix $\mathbf{B}$ such that for all
$n\times n$ matrices $\mathbf{A},\mathbf{K}$, the matrix $\bS=\mbox{rQK}\left(\mathbf{A},\mathbf{K}\right)$
can be expressed as:

\[
\bS=\left(\mathbf{B}\otimes\mathbf{I}_{n}\right)^{t}\left(\begin{array}{cccc}
\left(\mathbf{A}+m\mathbf{K}\right)\\
 & \mathbf{A}\\
 &  & \ddots\\
 &  &  & \mathbf{A}
\end{array}\right)\left(\mathbf{B}\otimes\mathbf{I}_{n}\right)
\]

\end{thm}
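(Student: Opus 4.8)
The plan is to exploit the fact that $\bS = \mathbf{I}_m \otimes \mathbf{A} + \mathbf{e}\mathbf{e}^t \otimes \mathbf{K}$ has its ``second-level'' structure entirely controlled by the rank-one matrix $\mathbf{e}\mathbf{e}^t$. First I would note that $\mathbf{e}\mathbf{e}^t$ is symmetric and has eigenvalues $m$ (once, with eigenvector $m^{-1/2}\mathbf{e}$) and $0$ ($m-1$ times). Hence there is an $m\times m$ orthogonal matrix $\mathbf{B}$ whose first row is $m^{-1/2}\mathbf{e}^t$ and such that $\mathbf{B}\,\mathbf{e}\mathbf{e}^t\,\mathbf{B}^t = \mathrm{diag}(m,0,\ldots,0)$. (Any completion of $m^{-1/2}\mathbf{e}$ to an orthonormal basis of $\R^m$ works; e.g.\ a Helmert matrix.) Crucially this $\mathbf{B}$ does not depend on $\mathbf{A}$ or $\mathbf{K}$, which is exactly what the statement demands.

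Next I would conjugate $\bS$ by the block rotation $\mathbf{R} = \mathbf{B}\otimes\mathbf{I}_n$, which is orthogonal by the computation already given in the excerpt. Using bilinearity of the Kronecker product together with the mixed-product rule (eq.~\ref{eq:kron-prod}), we get
\begin{align*}
\mathbf{R}\,\bS\,\mathbf{R}^t &= (\mathbf{B}\otimes\mathbf{I}_n)\bigl(\mathbf{I}_m\otimes\mathbf{A} + \mathbf{e}\mathbf{e}^t\otimes\mathbf{K}\bigr)(\mathbf{B}^t\otimes\mathbf{I}_n)\\
&= (\mathbf{B}\mathbf{I}_m\mathbf{B}^t)\otimes(\mathbf{A}) + (\mathbf{B}\,\mathbf{e}\mathbf{e}^t\,\mathbf{B}^t)\otimes(\mathbf{K})\\
&= \mathbf{I}_m\otimes\mathbf{A} + \mathrm{diag}(m,0,\ldots,0)\otimes\mathbf{K}.
\end{align*}
The right-hand side is block-diagonal: the first $n\times n$ block is $\mathbf{A} + m\mathbf{K}$ and the remaining $m-1$ blocks are each $\mathbf{A}$. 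Finally I would rearrange $\mathbf{R}\bS\mathbf{R}^t = \mathbf{D}$ into $\bS = \mathbf{R}^t\mathbf{D}\mathbf{R} = (\mathbf{B}\otimes\mathbf{I}_n)^t\,\mathbf{D}\,(\mathbf{B}\otimes\mathbf{I}_n)$, which is the claimed identity.

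There is no real obstacle here; the argument is essentially a one-line Kronecker computation once the right $\mathbf{B}$ is identified. The only point requiring a little care is making explicit that a single $\mathbf{B}$ works simultaneously for all $\mathbf{A}$ and $\mathbf{K}$ — but this is immediate since $\mathbf{B}$ is chosen purely to diagonalise $\mathbf{e}\mathbf{e}^t$, a matrix independent of $\mathbf{A}$ and $\mathbf{K}$. I would also remark that the diagonalisation $\mathbf{B}\,\mathbf{e}\mathbf{e}^t\,\mathbf{B}^t = \mathrm{diag}(m,0,\ldots,0)$ does not require $\mathbf{A}$ or $\mathbf{K}$ to be symmetric or positive semidefinite, so the theorem holds for arbitrary $n\times n$ matrices as stated; positive-definiteness only becomes relevant later when one wants the block-diagonal factors themselves to admit Cholesky square roots.
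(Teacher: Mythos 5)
Your proof is correct and follows essentially the same route as the paper: both choose an orthogonal $\mathbf{B}$ whose first row is $\mathbf{e}^{t}/\sqrt{m}$ (so that $\mathbf{B}\,\mathbf{e}\mathbf{e}^{t}\,\mathbf{B}^{t}=\mathrm{diag}(m,0,\ldots,0)$) and conjugate by $\mathbf{B}\otimes\mathbf{I}_{n}$ using the mixed-product rule to obtain the block-diagonal form. Your explicit remark that $\mathbf{B}$ is independent of $\mathbf{A}$ and $\mathbf{K}$ is a nice touch but adds nothing beyond what the paper's argument already establishes.
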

Since $\mathbf{R}=\left(\mathbf{B}\otimes\mathbf{I}_{n}\right)$ is
an orthogonal matrix, and the inner matrix is block-diagonal, the
inverse, eigendecomposition, and a Cholesky-based square root of $\bS$
all follow easily. 
\begin{proof}
We use the following ansatz: the $m\times m$ matrix $\mathbf{B}$
is an orthogonal matrix whose first row is a scaled version of the
all-ones vector\\
\[
\mathbf{B}=\left[\begin{array}{c}
\mathbf{e}^{t}/\sqrt{m}\\
\mathbf{L}
\end{array}\right]
\]

and $\mathbf{L}$ is chosen such that $\mathbf{B}^{t}\mathbf{B}=\mathbf{I}$,
so that the rows of \textbf{L }will be orthogonal to $\mathbf{e}$,
and $\mathbf{B}\mathbf{e}=\left[\begin{array}{cccc}
m/\sqrt{m} & 0 & \ldots & 0\end{array}\right]$. Note that \textbf{L} is not unique, but a matrix that verifies the
condition can always be found by the Gram-Schmidt process \citep{GolubVanLoan:MatrixComputations}
(although we will see below that a better option is available). 

As noted above $\bS=\left(\mathbf{I}_{m}\otimes\mathbf{A}\right)+\left(\mathbf{ee^{t}}\otimes\mathbf{K}\right)$.
We left-multiply by $\mathbf{R}=\left(\mathbf{B}\otimes\mathbf{I}_{n}\right)$
and right-multiply by $\mathbf{R}^{t}$:

\begin{eqnarray}
\mathbf{R}\bS\mathbf{R}^{t} & = & \left(\mathbf{B}\otimes\mathbf{I}_{n}\right)\left(\mathbf{I}_{m}\otimes\mathbf{A}\right)\left(\mathbf{B}^{t}\otimes\mathbf{I}_{n}\right)+\left(\mathbf{B}\otimes\mathbf{I}_{n}\right)\left(\mathbf{ee}^{t}\otimes\mathbf{K}\right)\left(\mathbf{B}^{t}\otimes\mathbf{I}_{n}\right)\nonumber \\
 & = & \left(\mathbf{B}\otimes\mathbf{A}\right)\left(\mathbf{B}^{t}\otimes\mathbf{I}_{n}\right)+\left(\mathbf{B}\mathbf{ee}^{t}\otimes\mathbf{K}\right)\left(\mathbf{B}^{t}\otimes\mathbf{I}_{n}\right)\nonumber \\
 & = & \left(\mathbf{B}\mathbf{B}^{t}\otimes\mathbf{A}\right)+\left(\mathbf{B}\mathbf{ee}^{t}\mathbf{B}^{t}\otimes\mathbf{K}\right)\nonumber \\
 & = & \left(\mathbf{I}_{m}\otimes\mathbf{A}\right)+\begin{array}{ccc}
m\mathbf{K} & 0 & \ldots\\
0 & 0 & \ldots\\
\vdots & \vdots & \ddots
\end{array}\label{eq:sigma-rotated}
\end{eqnarray}

Left-multiplying by $\mathbf{R}^{t}$ and right-multiplying by $\mathbf{R}$
completes the proof.
\end{proof}

\subsection{Inverse, factorised form, and eigenvalue decomposition\label{sub:Inverse-factorised-form-eigenvalue}}

A number of interesting properties follow directly from the block-rotated
form given by Theorem 1. 
\begin{cor}
The inverse of an rQK matrix is also an rQK matrix. Specifically, 

$\mbox{rQK}\left(\mathbf{A},\mathbf{K}\right)^{-1}=\mbox{rQK}\left(\mathbf{A}^{-1},\frac{1}{m}\left(\left(\mathbf{A}+m\mathbf{K}\right)^{-1}-\mathbf{A}^{-1}\right)\right)$\end{cor}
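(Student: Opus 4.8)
The plan is to apply the block-rotated form of Theorem 1 directly. Since $\mathbf{R}=\left(\mathbf{B}\otimes\mathbf{I}_{n}\right)$ is orthogonal, from Theorem 1 we have $\bS^{-1}=\mathbf{R}^{t}\mathbf{D}^{-1}\mathbf{R}$, where $\mathbf{D}=\mbox{bdiag}\left(\mathbf{A}+m\mathbf{K},\mathbf{A},\ldots,\mathbf{A}\right)$. The inverse of a block-diagonal matrix is block-diagonal with the blocks inverted, so $\mathbf{D}^{-1}=\mbox{bdiag}\left(\left(\mathbf{A}+m\mathbf{K}\right)^{-1},\mathbf{A}^{-1},\ldots,\mathbf{A}^{-1}\right)$. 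The task is then to recognise that $\mathbf{R}^{t}\mathbf{D}^{-1}\mathbf{R}$ is itself of rQK form and to read off the parameters.

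The cleanest route is to write $\mathbf{D}^{-1}$ as $\left(\mathbf{I}_{m}\otimes\mathbf{A}^{-1}\right)+\mbox{bdiag}\left(\left(\mathbf{A}+m\mathbf{K}\right)^{-1}-\mathbf{A}^{-1},\mathbf{0},\ldots,\mathbf{0}\right)$, i.e. splitting off the first block. The second term can be written as $\left(\mathbf{f}_1\mathbf{f}_1^{t}\otimes\mathbf{C}\right)$, where $\mathbf{f}_1=\left[1,0,\ldots,0\right]^{t}$ and $\mathbf{C}=\left(\mathbf{A}+m\mathbf{K}\right)^{-1}-\mathbf{A}^{-1}$. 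Now conjugate by $\mathbf{R}$: the first term gives $\mathbf{R}^{t}\left(\mathbf{I}_{m}\otimes\mathbf{A}^{-1}\right)\mathbf{R}=\left(\mathbf{B}^{t}\mathbf{B}\otimes\mathbf{A}^{-1}\right)=\mathbf{I}_{m}\otimes\mathbf{A}^{-1}$ by (\ref{eq:kron-prod}). For the second term, $\mathbf{R}^{t}\left(\mathbf{f}_1\mathbf{f}_1^{t}\otimes\mathbf{C}\right)\mathbf{R}=\left(\mathbf{B}^{t}\mathbf{f}_1\mathbf{f}_1^{t}\mathbf{B}\otimes\mathbf{C}\right)$, again by (\ref{eq:kron-prod}). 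The key computation is $\mathbf{B}^{t}\mathbf{f}_1$: since the first row of $\mathbf{B}$ is $\mathbf{e}^{t}/\sqrt{m}$, we have $\mathbf{f}_1^{t}\mathbf{B}=\mathbf{e}^{t}/\sqrt{m}$, hence $\mathbf{B}^{t}\mathbf{f}_1\mathbf{f}_1^{t}\mathbf{B}=\frac{1}{m}\mathbf{e}\mathbf{e}^{t}$. Therefore $\bS^{-1}=\mathbf{I}_{m}\otimes\mathbf{A}^{-1}+\mathbf{e}\mathbf{e}^{t}\otimes\frac{1}{m}\mathbf{C}=\mbox{rQK}\left(\mathbf{A}^{-1},\frac{1}{m}\left(\left(\mathbf{A}+m\mathbf{K}\right)^{-1}-\mathbf{A}^{-1}\right)\right)$, which is exactly the claimed formula.

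There is no real obstacle here; the only thing to get right is the single identity $\mathbf{B}^{t}\mathbf{f}_1\mathbf{f}_1^{t}\mathbf{B}=\frac{1}{m}\mathbf{e}\mathbf{e}^{t}$, which is where the normalisation constant $\tfrac1m$ in the statement comes from, and this follows immediately from the ansatz for $\mathbf{B}$ in the proof of Theorem 1. An alternative verification, which I would mention as a sanity check, is to multiply $\mbox{rQK}\left(\mathbf{A},\mathbf{K}\right)$ by the candidate $\mbox{rQK}\left(\mathbf{A}^{-1},\frac{1}{m}\mathbf{C}\right)$ using the product rule (\ref{eq:prod-two-rQKs}) and confirm that the result is $\mbox{rQK}\left(\mathbf{I}_{n},\mathbf{0}\right)=\mathbf{I}_{mn}$; this reduces to checking that $\mathbf{A}\cdot\tfrac1m\mathbf{C}+\mathbf{K}\mathbf{A}^{-1}+m\mathbf{K}\cdot\tfrac1m\mathbf{C}=\mathbf{0}$, i.e. $\left(\mathbf{A}+m\mathbf{K}\right)\tfrac1m\mathbf{C}+\mathbf{K}\mathbf{A}^{-1}=\mathbf{0}$, which holds since $\left(\mathbf{A}+m\mathbf{K}\right)\mathbf{C}=\mathbf{I}-\left(\mathbf{A}+m\mathbf{K}\right)\mathbf{A}^{-1}=-m\mathbf{K}\mathbf{A}^{-1}$.
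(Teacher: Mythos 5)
Your proposal is correct and follows essentially the same route as the paper: both invert the block-diagonal middle factor from Theorem 1 and identify the result as an rQK matrix, the only difference being that you compute the conjugation $\mathbf{R}^{t}\mathbf{D}^{-1}\mathbf{R}$ explicitly (via $\mathbf{B}^{t}\mathbf{f}_{1}\mathbf{f}_{1}^{t}\mathbf{B}=\frac{1}{m}\mathbf{e}\mathbf{e}^{t}$) whereas the paper matches the blocks of $\mathbf{D}^{-1}$ against the block-rotated form of a candidate $\mbox{rQK}\left(\mathbf{A}',\mathbf{K}'\right)$ and solves for $\mathbf{A}',\mathbf{K}'$. Your sanity check via the product rule is a nice, valid addition but not needed.
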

\begin{proof}
This result can also be derived from the Furrer-Heersink formula (eq.
\ref{eq:QK-inverse}) but follows naturally from Theorem 1. Assuming
that $\mathbf{A}+m\mathbf{K}$ and $\mathbf{K}$ have full rank, then
the inverse of $\bS$ exists and equals:

\[
\bS^{-1}=\left(\mathbf{B}\otimes\mathbf{I}_{n}\right)^{t}\left(\begin{array}{cccc}
\left(\mathbf{A}+m\mathbf{K}\right)^{-1}\\
 & \mathbf{A}^{-1}\\
 &  & \ddots\\
 &  &  & \mathbf{A}^{-1}
\end{array}\right)\left(\mathbf{B}\otimes\mathbf{I}_{n}\right)
\]

For $\bS^{-1}$ to be an rQK matrix we need to find matrices $\mathbf{A}'$
and $\mathbf{K}'$ such that $\mathbf{A}'+m\mathbf{K}'=\left(\mathbf{A}+m\mathbf{K}\right)^{-1}$,
and $\mathbf{A}'=\mathbf{A}^{-1}$. This implies 
\begin{eqnarray*}
\mathbf{A}+m\mathbf{K}' & = & \left(\mathbf{A}+m\mathbf{K}\right)^{-1}\\
\mathbf{K}' & = & \frac{1}{m}\left(\left(\mathbf{A}+m\mathbf{K}\right)^{-1}-\mathbf{A}^{-1}\right)
\end{eqnarray*}

which requires the existence of $\left(\mathbf{A}+m\mathbf{K}\right)^{-1}$
and $\mathbf{A}^{-1}$, both conditions being verified if $\bS$ is
invertible. 
\end{proof}
An important consequence of this result is that Hessian matrices in
latent Gaussian models with prior covariance $\bS$ turn out to be
quasi-Kronecker (section \ref{sub:Spike-train-data}), so that Laplace
approximations can be computed in $\O\left(mn^{3}\right)$ time. 
\begin{cor}
The eigenvalue decomposition of $\bS=\mbox{rQK}\left(\mathbf{A},\mathbf{K}\right)$
is given by

\[
\bS=\mathbf{R}^{t}\mathbf{N}^{t}\mathbf{D}\mathbf{N}\mathbf{R}
\]

where $\mathbf{R}$ is as in theorem 1, $\mathbf{N}=\mbox{bdiag}\left(\mbox{ev}\left(\mathbf{A}+m\mathbf{K}\right),\mbox{ev}\left(\mathbf{A}\right),\ldots,\mbox{ev}\left(\mathbf{A}\right)\right)$
and $\mathbf{D}=\mbox{diag}\left(\lambda\left(\mathbf{A}+m\mathbf{K}\right),\lambda\left(\mathbf{A}\right),\ldots,\lambda\left(\mathbf{A}\right)\right)$. \end{cor}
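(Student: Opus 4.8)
The plan is to read the eigendecomposition straight off Theorem 1. That theorem already writes $\bS = \mathbf{R}^{t}\mathbf{C}\mathbf{R}$ with $\mathbf{R} = \mathbf{B}\otimes\mathbf{I}_{n}$ orthogonal and $\mathbf{C} = \mbox{bdiag}\left(\mathbf{A}+m\mathbf{K},\mathbf{A},\ldots,\mathbf{A}\right)$ block-diagonal, so the only work left is to diagonalise $\mathbf{C}$ block by block and to check that composing the two orthogonal transformations yields again an orthogonal transformation. In particular nothing beyond Theorem 1 and the spectral theorem is needed, and no invertibility hypothesis is required (unlike for Corollary 2).

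First I would use the symmetry of $\mathbf{A}$ and $\mathbf{K}$ (both are covariance matrices), hence of $\mathbf{A}+m\mathbf{K}$, to write the spectral decompositions $\mathbf{A} = \mbox{ev}\left(\mathbf{A}\right)^{t}\lambda\left(\mathbf{A}\right)\mbox{ev}\left(\mathbf{A}\right)$ and $\mathbf{A}+m\mathbf{K} = \mbox{ev}\left(\mathbf{A}+m\mathbf{K}\right)^{t}\lambda\left(\mathbf{A}+m\mathbf{K}\right)\mbox{ev}\left(\mathbf{A}+m\mathbf{K}\right)$, with the convention that $\mbox{ev}(\cdot)$ is the orthogonal matrix whose \emph{rows} are the eigenvectors and $\lambda(\cdot)$ the diagonal matrix of the corresponding eigenvalues. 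Since a block-diagonal matrix is the direct sum of its blocks, stacking these decompositions gives exactly $\mathbf{C} = \mathbf{N}^{t}\mathbf{D}\mathbf{N}$ with $\mathbf{N}$ and $\mathbf{D}$ as in the statement. Substituting into $\bS = \mathbf{R}^{t}\mathbf{C}\mathbf{R}$ gives $\bS = \mathbf{R}^{t}\mathbf{N}^{t}\mathbf{D}\mathbf{N}\mathbf{R} = \left(\mathbf{N}\mathbf{R}\right)^{t}\mathbf{D}\left(\mathbf{N}\mathbf{R}\right)$. To close the argument I would observe that $\mathbf{N}\mathbf{R}$ is orthogonal: $\mathbf{R}$ is orthogonal by the computation preceding Theorem 1, $\mathbf{N}$ is block-diagonal with orthogonal blocks and hence orthogonal, and a product of orthogonal matrices is orthogonal. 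Therefore $\left(\mathbf{N}\mathbf{R}\right)^{t}\mathbf{D}\left(\mathbf{N}\mathbf{R}\right)$ is a bona fide eigendecomposition: the columns of $\mathbf{R}^{t}\mathbf{N}^{t}$ form an orthonormal eigenbasis and the diagonal of $\mathbf{D}$ collects the eigenvalues, so the spectrum of $\bS$ consists of the eigenvalues of $\mathbf{A}+m\mathbf{K}$ together with $m-1$ copies of those of $\mathbf{A}$.

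I do not expect a genuine obstacle; the result is essentially bookkeeping on top of Theorem 1. The two points that need care are fixing the orientation convention for $\mbox{ev}(\cdot)$ consistently so the transposes line up throughout, and noting the (harmless) non-uniqueness of the decomposition --- $\mathbf{B}$ is determined only up to the choice of the complement $\mathbf{L}$ used in Theorem 1, and the per-block eigenvector matrices are determined only up to orthogonal mixing inside any degenerate eigenspace. It is also worth recording that, generically, each repeated eigenvalue of $\mathbf{A}$ has geometric multiplicity at least $m-1$ in $\bS$, with eigenspace spanned by the corresponding columns drawn from the $m-1$ trailing blocks of $\mathbf{R}^{t}\mathbf{N}^{t}$.
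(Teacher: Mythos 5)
Your proposal is correct and follows essentially the same route as the paper: diagonalise each block of the block-diagonal matrix from Theorem 1 via the spectral theorem, stack the results into $\mathbf{N}^{t}\mathbf{D}\mathbf{N}$, and note that $\mathbf{N}\mathbf{R}$ is orthogonal. Your version merely spells out the symmetry assumption and the row-versus-column convention for $\mbox{ev}(\cdot)$, which the paper leaves implicit.
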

\begin{proof}
Since $\left(\begin{array}{cccc}
\left(\mathbf{A}+m\mathbf{K}\right)\\
 & \mathbf{A}\\
 &  & \ddots\\
 &  &  & \mathbf{A}
\end{array}\right)$ is block-diagonal its eigendecomposition into $\mathbf{N}^{t}\mathbf{D}\mathbf{N}$
is straightforward. The matrix $\mathbf{NR}$ is orthogonal: $\mathbf{R}^{t}\mathbf{N}^{t}\mathbf{N}\mathbf{R}=\mathbf{I}$,
and since $\bS=\mathbf{R}^{t}\mathbf{N}^{t}\mathbf{D}\mathbf{N}\mathbf{R}$
with diagonal $\mathbf{D}$ we have its eigenvalue decomposition.\end{proof}
\begin{cor}
A set of square root factorisations of \textbf{$\bS$} is given by
$\bS=\mathbf{G}^{t}\mathbf{G}$, with 
\[
\mathbf{G}=\mathbf{\left(\begin{array}{cccc}
\mathbf{U}\\
 & \mathbf{V}\\
 &  & \ddots\\
 &  &  & \mathbf{V}
\end{array}\right)}\mathbf{R}
\]

where $\mathbf{U}^{t}\mathbf{U}=\mathbf{A}+m\mathbf{K}$ and $\mathbf{V}^{t}\mathbf{V}=\mathbf{A}$. 
\end{cor}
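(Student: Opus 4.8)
The plan is to read off the factorisation directly from the block-rotated form established in Theorem 1, exactly as was done for the inverse in Corollary 1 and the eigendecomposition in Corollary 2. Starting from
\[
\bS = \left(\mathbf{B}\otimes\mathbf{I}_{n}\right)^{t}\mathbf{D}_{\bS}\left(\mathbf{B}\otimes\mathbf{I}_{n}\right),
\qquad
\mathbf{D}_{\bS}=\mbox{bdiag}\left(\mathbf{A}+m\mathbf{K},\mathbf{A},\ldots,\mathbf{A}\right),
\]
I would note that a symmetric positive (semi-)definite block-diagonal matrix has an obvious square root: if $\mathbf{U}^{t}\mathbf{U}=\mathbf{A}+m\mathbf{K}$ and $\mathbf{V}^{t}\mathbf{V}=\mathbf{A}$ then $\mathbf{D}_{\bS}=\mathbf{H}^{t}\mathbf{H}$ with $\mathbf{H}=\mbox{bdiag}\left(\mathbf{U},\mathbf{V},\ldots,\mathbf{V}\right)$, since the cross-terms vanish by block-diagonality. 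Substituting gives $\bS=\mathbf{R}^{t}\mathbf{H}^{t}\mathbf{H}\mathbf{R}=\left(\mathbf{H}\mathbf{R}\right)^{t}\left(\mathbf{H}\mathbf{R}\right)$, and setting $\mathbf{G}=\mathbf{H}\mathbf{R}$ is precisely the claimed factorisation.

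Concretely the steps are: (i) invoke Theorem 1 to write $\bS$ in block-rotated form; (ii) observe that any choice of matrices $\mathbf{U},\mathbf{V}$ with $\mathbf{U}^{t}\mathbf{U}=\mathbf{A}+m\mathbf{K}$ and $\mathbf{V}^{t}\mathbf{V}=\mathbf{A}$ exists because both $\mathbf{A}+m\mathbf{K}$ and $\mathbf{A}$ are symmetric positive semi-definite covariance matrices (Cholesky factors, matrix square roots, or eigen-based roots all work, which is exactly why the corollary speaks of ``a set of'' factorisations); (iii) verify by block multiplication that $\mathbf{H}^{t}\mathbf{H}=\mathbf{D}_{\bS}$; (iv) conjugate by $\mathbf{R}=\mathbf{B}\otimes\mathbf{I}_{n}$ and use associativity of the transpose, $\left(\mathbf{H}\mathbf{R}\right)^{t}=\mathbf{R}^{t}\mathbf{H}^{t}$, to conclude $\bS=\mathbf{G}^{t}\mathbf{G}$.

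There is essentially no obstacle here; the work has already been done in proving Theorem 1. The only point requiring a word of care is the positive-semidefiniteness of $\mathbf{A}+m\mathbf{K}$: in the functional additive model $\mathbf{A}$ and $\mathbf{K}$ are both covariance matrices, hence PSD, so their nonnegative combination is PSD and admits a factor $\mathbf{U}$. (If one wants genuine square-root factorisations with $\mathbf{G}$ square and invertible one additionally needs $\bS$ — equivalently $\mathbf{A}$ and $\mathbf{A}+m\mathbf{K}$ — to be nonsingular, matching the hypotheses used in Corollary 1.) I would also remark, as the paper hints with ``a set of'', that the freedom in choosing $\mathbf{U}$ and $\mathbf{V}$ (and the freedom in $\mathbf{L}$ from Theorem 1) means $\mathbf{G}$ is far from unique; the Cholesky choice is singled out because it is cheapest to compute and is triangular up to the block rotation, and because $\mathbf{B}$ can be taken to be a Householder-type reflection so that applying $\mathbf{R}$ costs only $\O\left(mn\right)$. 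The total cost of forming and applying $\mathbf{G}$ is therefore dominated by the two Cholesky factorisations of $n\times n$ matrices, i.e. $\O\left(n^{3}\right)$ independent of $m$, with $\O\left(mn^{3}\right)$ for the full latent field as claimed in the introduction to this section.
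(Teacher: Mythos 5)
Your proof is correct and follows exactly the route the paper intends: the paper itself gives no separate argument, remarking only that the corollary is ``an entirely straightforward consequence'' of Theorem 1, and your substitution $\bS=\mathbf{R}^{t}\mathbf{H}^{t}\mathbf{H}\mathbf{R}=\left(\mathbf{H}\mathbf{R}\right)^{t}\left(\mathbf{H}\mathbf{R}\right)$ with $\mathbf{H}=\mbox{bdiag}\left(\mathbf{U},\mathbf{V},\ldots,\mathbf{V}\right)$ is precisely that consequence. Your added remarks on positive semi-definiteness and on the non-uniqueness of $\mathbf{U}$, $\mathbf{V}$ and $\mathbf{L}$ are accurate and consistent with the paper's discussion in the following section.
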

This corollary is an entirely straightforward consequence of the theorem,
but note that $\mathbf{G}$ has special form, as the product of a
block-diagonal matrix and a block-permutation matrix. In the next
section we outline an algorithm that takes advantage of these properties
to compute the factorisation with initial cost $\O\left(n^{3}\right)$
and further cost $\O\left(mn^{2}\right)$ per MVP.

\subsection{A fast algorithm for square roots of $\bS$\label{sub:A-fast-algorithm-square-roots}}

Our fast algorithm is based on Corollary 4. Given a matrix square
root $\bS=\mathbf{G}^{t}\mathbf{G}$, the typical operations used
in statistical computation are the following:
\begin{itemize}
\item Correlating transform: given $\mathbf{z}\sim\N\left(0,\mathbf{I}\right)$
the vector $\mathbf{x}=\mathbf{G}^{t}\mathbf{z}$ is distributed according
to $\mathbf{x}\sim\N\left(0,\bS\right)$. The correlating transform
(MVP with $\mathbf{G}^{t}$) takes a IID vector and gives it the right
correlation structure. 
\item ``Whitening'' transform: given $\mathbf{x}\sim\N\left(0,\bS\right)$
the vector $\mathbf{z}=\mathbf{G}^{-t}\mathbf{x}$ is distributed
according to $\mathbf{z}\sim\N\left(0,\mathbf{I}\right)$. This is
the inverse operation of the correlating transform and produces white
noise from a correlated vector. 
\item Evaluation of $p\left(\mathbf{x}\vert\bmu,\bS\right)=\left(2\pi\right)^{-mn/2}\exp\left(-\frac{1}{2}\left(\mathbf{x}-\bmu\right)^{t}\bS^{-1}\left(\mathbf{x}-\bmu\right)-\log\det\bS^{-1}\right)$
. This can be done using the whitening transform but in addition the
log-determinant of $\bS$ is needed. We give a formula below. 
\end{itemize}
Our algorithm is based on the square roots $\mathbf{U}^{t}\mathbf{U}=\mathbf{A}+m\mathbf{K}$
and $\mathbf{V}^{t}\mathbf{V}=\mathbf{A}$, which can be obtained
from the Cholesky or eigendecompositions. In most cases the Cholesky
version is faster but the eigendecomposition has advantages in certain
contexts. We outline a generic algorithm here, the only practical
difference being in solving the linear systems $\mathbf{U}^{-t}\mathbf{x}$
and $\mathbf{V}^{-t}\mathbf{x}$, which should be done via forward
or back substitution if $\mathbf{U}$ and $\mathbf{V}$ are Cholesky
factors \citep{GolubVanLoan:MatrixComputations}.

\subsubsection{Computing the correlating and whitening transforms }

Corrolary 4 tells us that the square root of $\bS$, $\mathbf{G}^{t}$
equals: 

\begin{equation}
\mathbf{G}^{t}=\left(\mathbf{B}^{t}\otimes\mathbf{I}_{n}\right)\left(\begin{array}{cccc}
\mathbf{U}^{t}\\
 & \mathbf{V}^{t}\\
 &  & \ddots\\
 &  &  & \mathbf{V}^{t}
\end{array}\right)\label{eq:correlating-trans}
\end{equation}

We first show how to compute the correlating transform. Let $\mathbf{z}=\mbox{vec}\left(\left[\mathbf{z}_{1}\ldots\mathbf{z}_{m}\right]\right)=\mbox{vec}\left(\mathbf{Z}\right)$.
Using eq. \ref{eq:vec-kron}, we can rewrite $\mathbf{G}^{t}\mathbf{z}$
as:

\begin{eqnarray*}
\mathbf{G}^{t}\mathbf{z} & = & \left(\mathbf{B}^{t}\otimes\mathbf{I}_{n}\right)\left(\begin{array}{cccc}
\mathbf{U}^{t}\\
 & \mathbf{V}^{t}\\
 &  & \ddots\\
 &  &  & \mathbf{V}^{t}
\end{array}\right)\mbox{vec}\left(\mathbf{Z}\right)\\
 & = & \left(\mathbf{B}^{t}\otimes\mathbf{I}_{n}\right)\mbox{vec}\left(\left[\begin{array}{cccc}
\mathbf{U}\mathbf{z}_{1}^{t} & \mathbf{V}^{t}\mathbf{z}_{2} & \ldots & \mathbf{V}^{t}\mathbf{z}_{m}\end{array}\right]\right)\\
 & = & \left(\mathbf{B}^{t}\otimes\mathbf{I}_{n}\right)\mbox{vec}\left(\mathbf{Y}\right)\\
 & = & \mbox{vec}\left(\mathbf{Y}\mathbf{B}\right)
\end{eqnarray*}

Computing the correlating transform entails $m$ MVP products with
the square roots$\begin{array}{cccc}
\mathbf{U}\mathbf{z}_{1}^{t} & \mathbf{V}^{t}\mathbf{z}_{2} & \ldots & \mathbf{V}^{t}\mathbf{z}_{m}\end{array}$ ($\O\left(n^{2}\right)$ in the general case) and $m$ MVP products
with $\mathbf{B}$. The latter $m$ products can be done in $\O\left(mn\right)$
time, as we show below, meaning that the whole operation has total
cost $\O\left(mn^{2}\right)$.

The whitening transform can be computed in a similar way:
\begin{equation}
\mathbf{G}^{-t}=\left(\begin{array}{cccc}
\mathbf{U}^{-t}\\
 & \mathbf{V}^{-t}\\
 &  & \ddots\\
 &  &  & \mathbf{V}^{-t}
\end{array}\right)\left(\mathbf{B}\otimes\mathbf{I}_{n}\right)\label{eq:whitening-transform}
\end{equation}

so that for $\mathbf{x}=\mbox{vec}\left(\left[\mathbf{x}_{1}\ldots\mathbf{x}_{m}\right]\right)=\mbox{vec}\left(\mathbf{X}\right)$
\begin{eqnarray*}
\mathbf{G}^{-t}\mathbf{x} & = & \left(\begin{array}{cccc}
\mathbf{U}^{-t}\\
 & \mathbf{V}^{-t}\\
 &  & \ddots\\
 &  &  & \mathbf{V}^{-t}
\end{array}\right)\left(\mathbf{B}\otimes\mathbf{I}_{n}\right)\mbox{vec}\left(\mathbf{X}\right)\\
 & = & \mbox{bdiag}\left(\mathbf{U}^{-t},\mathbf{V}^{-t},\ldots,\mathbf{V}^{-t}\right)\mbox{vec}\left(\mathbf{X}\mathbf{B}^{t}\right)\\
 & = & \mbox{bdiag}\left(\mathbf{U}^{-t},\mathbf{V}^{-t},\ldots,\mathbf{V}^{-t}\right)\mbox{vec}\left(\mathbf{Y}\right)\\
 & = & \mbox{vec}\left(\left[\begin{array}{cccc}
\mathbf{U}^{-t}\mathbf{y}_{1} & \mathbf{V}^{-t}\mathbf{y}_{2} & \ldots & \mathbf{V}^{-t}\mathbf{y}_{m}\end{array}\right]\right)
\end{eqnarray*}

and since matrix solves of the form $\mathbf{U}^{-t}\mathbf{y}$ and
$\mathbf{V}^{-t}\mathbf{y}$ have $\O\left(n^{2}\right)$ cost for
Cholesky or eigenfactors, the entire operation has the same cost as
the correlating transform ($\O\left(mn^{2}\right)$).

\subsubsection{Fast rotations\label{sub:Fast-rotations}}

The whitening and correlating transforms above involve MVPs with a
$m\times m$ orthogonal matrix $\mathbf{B}=\left[\begin{array}{c}
\mathbf{e}^{t}/\sqrt{m}\\
\mathbf{L}
\end{array}\right]$, where $\mathbf{L}$ is such that $\mathbf{B}$ is orthonormal. Construction
of $\mathbf{L}$ via the Gram-Schmidt process would have initial cost
$\mathcal{O}\left(m^{3}\right)$ and MVPs with $\mathbf{B}$ $\O\left(m^{2}\right)$.
There is however a particular choice for $\mathbf{L}$ that brings
these costs down to $\O\left(1\right)$ and $\mathcal{O}\left(m\right)$:
\begin{prop}
The matrix $\mathbf{L}^{t}=\left[\begin{array}{c}
a\mathbf{1}_{m-1}^{t}\\
b+\mathbf{I}_{m-1}
\end{array}\right]$, with $a=\frac{1}{\sqrt{m}}$ and $b=-\left(\frac{1}{\left(m-1\right)}\left(1+\frac{1}{\sqrt{m}}\right)\right)$
is an orthonormal basis for the set of zero-mean vectors $\mathcal{A}_{m}=\left\{ \mathbf{u}\in\R^{m}|\sum u_{i}=0\right\} $. 
\end{prop}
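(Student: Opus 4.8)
The plan is to verify directly the two conditions that make $\mathbf{B}=\left[\begin{array}{c}\mathbf{e}^{t}/\sqrt{m}\\ \mathbf{L}\end{array}\right]$ orthogonal, and then close with a dimension count. Writing $\mathbf{B}\mathbf{B}^{t}$ in $2\times 2$ block form, the top-left block equals $\mathbf{e}^{t}\mathbf{e}/m=1$ automatically, the off-diagonal blocks are $\mathbf{L}\mathbf{e}/\sqrt{m}$ (and its transpose), and the bottom-right block is $\mathbf{L}\mathbf{L}^{t}$. So it suffices to show (i) $\mathbf{L}\mathbf{e}=\mathbf{0}$, which says every column of $\mathbf{L}^{t}$ is zero-mean and hence lies in $\mathcal{A}_{m}$, and (ii) $\mathbf{L}\mathbf{L}^{t}=\mathbf{I}_{m-1}$, which says those columns are orthonormal. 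Since $\mathcal{A}_{m}$ has dimension $m-1$ and $\mathbf{L}^{t}$ has exactly $m-1$ columns, (i) and (ii) together force the columns of $\mathbf{L}^{t}$ to be an orthonormal basis of $\mathcal{A}_{m}$, which is the claim.

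For the computation I would read ``$b+\mathbf{I}_{m-1}$'' as $\mathbf{I}_{m-1}+b\mathbf{1}_{m-1}\mathbf{1}_{m-1}^{t}$ (add $b$ to every entry of the identity), so that $\mathbf{L}^{t}=\left[\begin{array}{c}a\mathbf{1}_{m-1}^{t}\\ \mathbf{I}_{m-1}+b\mathbf{1}_{m-1}\mathbf{1}_{m-1}^{t}\end{array}\right]$. Abbreviating $\mathbf{1}=\mathbf{1}_{m-1}$, $\mathbf{I}=\mathbf{I}_{m-1}$ and multiplying out gives $\mathbf{L}\mathbf{e}=(a+1+(m-1)b)\mathbf{1}$ and $\mathbf{L}\mathbf{L}^{t}=a^{2}\mathbf{1}\mathbf{1}^{t}+(\mathbf{I}+b\mathbf{1}\mathbf{1}^{t})^{2}$. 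Expanding the square and using $\mathbf{1}^{t}\mathbf{1}=m-1$ collapses it to $\mathbf{I}+(2b+(m-1)b^{2})\mathbf{1}\mathbf{1}^{t}$, so both (i) and (ii) reduce to two scalar identities in $a,b,m$: namely $a+1+(m-1)b=0$ and $a^{2}+2b+(m-1)b^{2}=0$.

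The first identity is precisely the definition of $b$. For the second I would substitute $(m-1)b=-(1+a)$ to rewrite $a^{2}+2b+(m-1)b^{2}$ as $a^{2}+2b-b(1+a)=a^{2}+b(1-a)$, and then plug in $a=1/\sqrt{m}$ and $b=-(1+a)/(m-1)$ to get $b(1-a)=-(1-a^{2})/(m-1)=-(1-1/m)/(m-1)=-1/m=-a^{2}$, so the expression vanishes. I expect no genuine obstacle here: the work is entirely bookkeeping, and the only things to watch are the two distinct ambient dimensions ($m$ for $\mathcal{A}_{m}$ versus $m-1$ inside the blocks of $\mathbf{B}$) and the convention for ``$b+\mathbf{I}_{m-1}$''. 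One could equally well skip the block algebra and compute the inner products $\mathbf{v}_{i}^{t}\mathbf{v}_{j}$ of the columns $\mathbf{v}_{1},\dots,\mathbf{v}_{m-1}$ of $\mathbf{L}^{t}$ by hand, but the rank-one-update form above makes the cancellations most transparent.
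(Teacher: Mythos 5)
Your proof is correct and follows essentially the same route as the paper's: both reduce the claim to the two scalar identities $a+1+(m-1)b=0$ and $a^{2}+2b+(m-1)b^{2}=0$, the only difference being that the paper solves these equations to derive $a$ and $b$ while you verify that the stated values satisfy them (and you make the final dimension-count step explicit, which the paper leaves implicit). No gaps.
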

The proof can be found in Appendix \ref{sub:An-orthogonal-basis-zero-mean}.
With this choice of $\mathbf{L}$ the matrix $\mathbf{B}$ equals

\begin{equation}
\mathbf{B}=\left[\begin{array}{ccccc}
a & a & a & a & a\\
a & b+1 & b & b & b\\
a & b & b+1 & \ldots & b\\
a & \vdots & \vdots & \ddots & b\\
a & b & b & \ldots & b+1
\end{array}\right]\label{eq:B-matrix}
\end{equation}

and it is easy to check that $\mathbf{B}^{t}\mathbf{B}=\mathbf{I}$,
$\mathbf{B}\mathbf{e}=\left[\begin{array}{cccc}
1 & 0 & \ldots & 0\end{array}\right]$, and that in addition $\mathbf{B}^{t}=\mathbf{B}$, so that $\mathbf{X}\mathbf{B}=\left(\mathbf{B}^{t}\mathbf{X}^{t}\right)^{t}=\left(\mathbf{B}\mathbf{X}\right)^{t}$.
Due to the specific structure, a single MVP with $\mathbf{B}$ has
cost $\mathcal{O}\left(m\right)$ (which is the cost of multiplying
the entries by $a$, by $b$ and computing the sum).

\subsubsection{Computing Gaussian densities and the determinant}

In applications we will need to compute Gaussian densities of the
following form: given $\mathbf{x}\sim\N\left(0,\bS\right)$ 

\begin{eqnarray}
p\left(\mathbf{x}\right) & = & \left(2\pi\right)^{-mn/2}\exp\left(-\frac{1}{2}\mathbf{x}^{t}\bS^{-1}\mathbf{x}-\log\det\bS^{-1}\right)\nonumber \\
 & = & \left(2\pi\right)^{-mn/2}\exp\left(-\frac{1}{2}\mathbf{x}^{t}\mathbf{G}^{-t}\mathbf{G}^{-1}\mathbf{x}-\log\det\bS^{-1}\right)\nonumber \\
 & = & \left(2\pi\right)^{-mn/2}\exp\left(-\frac{1}{2}\mathbf{z}^{t}\mathbf{z}-\log\det\bS^{-1}\right)\label{eq:Gaussian-density}
\end{eqnarray}

where $\mathbf{z}=\mathbf{G}^{-1}\mathbf{x}$ is obtained via whitening.
In addition, the log-determinant of $\bS$ is needed. Similar matrices
have the same determinant, so that $\mbox{det}\left(\mathbf{R}\bS\mathbf{R}^{t}\right)=\mbox{det}\left(\bS\right)$.
Further, for block-diagonal matrices, $\log\det\left(\mbox{bdiag}\mathbf{A}_{i}\right)=\sum\log\det\mathbf{A}_{i}$,
which together with Theorem 1 implies that 
\begin{equation}
\log\det\bS=\log\det\left(\mathbf{A}+m\mathbf{K}\right)+\left(m-1\right)\log\det\mathbf{A}\label{eq:log-det-sigma}
\end{equation}

These two log-determinants can be computed at $\O\left(n\right)$
cost from the Cholesky or eigendecompositions of $\mathbf{A}+m\mathbf{K}$
and $\mathbf{A}$, which are needed anyway for the computation of
the square root of $\bS$.

\subsubsection{Summary}

The cost of computing matrix square roots for quasi-Kronecker matrices
is dominated by the initial cost of computing two $\O\left(n^{3}\right)$
decompositions. The resulting factors have $\O\left(n^{2}\right)$
storage cost and no further storage is required. Further operations
(whitening, correlating, computing the determinant) come at much lower
cost ($\O\left(m^{2}n\right)$ for whitening and correlating, $\O\left(n\right)$
for the determinant).

\section{Applications\label{sec:Applications}}

We first describe how to use our techniques in the linear-Gaussian
setting. Under the assumption of Gaussian observations the marginal
likelihood of the hyperparameters can be computed efficiently, and
we show in addition how to compute the gradient of the marginal likelihood
in $\O\left(n^{3}+mn^{2}\right)$ cost. The resulting algorithm scales
very well with $m$.

When the observations are not Gaussian, the marginal likelihood is
unavailable in closed form. There are many ways to implement inference
in this case, including variational inference \citep{OpperArchambeau:VariationalGaussianApproxRev},
nested Laplace approximations \citep{Rue:INLA}, expectation propagation
\citep{Minka:EP}, various form of Markov Chain Monte Carlo \citep{Neal:MonteCarloImplGaussProcModelsBayesRegClass,RueHeld:GMRFTheoryApplications,Murray:EllSliceSampling},
or some combination of methods. We study in detail an example from
neuroscience, smoothing of repeated spike train data, where the observations
are Poisson distributed. We use a simple Laplace approximation, but
our methods can be applied in a similar way to more sophisticated
approximate inference or for MCMC sampling.

As stated in the introduction, we assume throughout that there are
$n\times m$ observations, corresponding to $m$ realisations of a
latent process observed on a grid of size $n$, with grid points $x_{1}\ldots x_{n}$.
The grid may be irregular but it is essential that it be constant
\emph{across realisations}, otherwise the covariance matrix does not
have the requisite form. The observations may be represented as a
$n\times m$ matrix $\mathbf{Y}$, or equivalently as a vector $\mathbf{y}=\mbox{vec}\left(\mathbf{Y}\right)$.

\subsection{Gaussian observations\label{sub:Gaussian-observations}}

In the Gaussian case the conditional distribution of the observations
is assumed to be iid, with $\mathbf{y}_{ij}\sim\N\left(g_{ij},\sigma^{2}\right)$.
To simplify the notation we will sometimes absorb the noise variance
$\sigma^{2}$ into the hyperparameters $\bt$.

Depending on the scenario the goal of the analysis varies, and we
might seek to estimate the latent functions\\
 $g_{1}\left(x\right),\ldots,g_{m}\left(x\right)$, the latent mean
function $f\left(x\right)$, or to make predictions for an unobserved
function $g_{m+1}\left(x\right)$, etc. How we treat the hyperparameters
may vary as well. In a ``mixed modelling'' spirit we might just
want to compute their maximum likelihood estimate, or instead we may
want to perform a full Bayesian analysis and produce samples from
the posterior over hyperparameters, $p\left(\bt|\mathbf{y}\right)$. 

In all of these cases the main quantities of interest are:
\begin{enumerate}
\item The (log-) marginal likelihood $\log\, p\left(\mathbf{y}|\bt,\sigma^{2}\right)$
and optionally its derivatives with respect to the hyperparameters
\item The conditional regressions $p\left(\mathbf{f}|\mathbf{y},\bt,\sigma^{2}\right)$
and $p\left(\mathbf{g}|\mathbf{y},\bt,\sigma^{2}\right)$
\end{enumerate}
Most of the other quantities are simply variants of the above and
we will not go explicitly through all the calculations.

\subsubsection{Computing the marginal likelihood and its derivatives\label{sub:gaussian-marginal-likelihood}}

The marginal likelihood can be computed easily by noting that $\mathbf{y}=\mathbf{g}+\epsilon$,
with $\epsilon\sim\N\left(0,\sigma^{2}\mathbf{I}\right)$, so that
the marginal distribution of $\mathbf{y}$ is 
\[
\mathbf{y}\sim\mathcal{N}\left(0,\bS+\sigma^{2}\mathbf{I}\right)
\]

and that if $\bS$ has rQK structure, so does $\bS'=\bS+\sigma^{2}\mathbf{I}$.
Specifically, $\bS'=\mbox{rQK}\left(\mathbf{A}+\sigma^{2}\mathbf{I},\mathbf{K}\right)=\mbox{rQK}\left(\mathbf{A}',\mathbf{K}\right)$,
and we may use formula (\ref{eq:Gaussian-density}) directly to compute
$\log\, p\left(\mathbf{y}|\bt,\sigma^{2}\right)$. 

For maximum likelihood estimation of the hyperparameters (and for
Hamiltonian Monte Carlo) it is useful to compute the gradient of $\log\, p\left(\mathbf{y}|\bt,\sigma^{2}\right)$
with respect to the hyperparameters. To simplify notation we momentarily
absorb the noise variance $\sigma^{2}$ into the hyperparameters $\bt$
and note $\bS$ the covariance matrix of $\mathbf{y}.$

An expression for the derivative with respect to a single hyperparameter
$\theta_{j}$ is given by \citet{RasmussenWilliamsGP} (eq. 5.9):

\begin{equation}
\frac{\partial}{\partial\theta_{j}}\log\, p\left(\mathbf{y}|\bt\right)=\frac{1}{2}\left(\mathbf{y}^{t}\bS^{-1}\left(\frac{\partial}{\partial\theta_{j}}\bS\right)\bS^{-1}\mathbf{y}-\mbox{tr}\left(\bS^{-1}\left(\frac{\partial}{\partial\theta_{j}}\bS\right)\right)\right)\label{eq:grad-marginal-lik}
\end{equation}

It is straightforward to verify that if $\bS$ is rQK, so is its derivative
$\frac{\partial}{\partial\theta_{j}}\bS$, which implies that $\bS^{-1}\left(\frac{\partial}{\partial\theta_{j}}\bS\right)$
is also a rQK matrix (see section \ref{sub:Block-rotated-form}).
Using that property along with theorem 1, some further algebra shows:

\begin{equation}
\mbox{tr}\left(\bS^{-1}\left(\frac{\partial}{\partial\theta_{j}}\bS\right)\right)=\mbox{tr}\left(\left(\mathbf{A}+m\mathbf{K}\right)^{-1}\frac{\partial}{\partial\theta_{j}}\left(\mathbf{A}+m\mathbf{K}\right)\right)+\left(m-1\right)\mbox{tr}\left(\mathbf{A}^{-1}\frac{\partial}{\partial\theta_{j}}\left(\mathbf{A}\right)\right)\label{eq:deriv-trace}
\end{equation}

Computing $\bS^{-1}\mathbf{y}$ can be done using the fast matrix
square root, and $\left(\frac{\partial}{\partial\theta_{j}}\bS\right)$
is a rQK matrix, so that the dot product $\mathbf{y}^{t}\bS^{-1}\left(\frac{\partial}{\partial\theta_{j}}\bS\right)\bS^{-1}\mathbf{y}$
in eq. \ref{eq:grad-marginal-lik} is tractable as well (with complexity
$\O\left(mn^{2}\right)$). Given that factorisations of $\left(\mathbf{A}+m\mathbf{K}\right)$
and $\mathbf{A}$ are needed to compute the marginal likelihood anyway,
the extra cost in computing derivatives is relatively small (the factorisations
have cost $\O\left(n^{3}\right)$, the rest is $\O\left(mn^{2}\right)$).

\subsubsection{Conditional regressions\label{sub:Conditional-regressions}}

The conditional regressions are given by the posterior distributions
$p\left(\mathbf{f}|\mathbf{y},\bt,\sigma^{2}\right)$ and $p\left(\mathbf{g}|\mathbf{y},\bt,\sigma^{2}\right)$.
Both distributions are Gaussian, and their mean and covariance can
be found through the usual Gaussian conditioning formulas:

\begin{eqnarray*}
E\left(\mathbf{f}|\mathbf{y},\bt,\sigma^{2}\right) & = & \mathbf{Q}_{f}^{-1}\left(\mathbf{A}+\sigma^{2}\mathbf{I}\right)^{-1}\left(\sum_{i=1}^{m}\mathbf{y}_{i}\right)\\
Cov\left(\mathbf{f}|\mathbf{y},\bt,\sigma^{2}\right) & = & \mathbf{Q}_{f}^{-1}\\
\mathbf{Q}_{f} & = & \mathbf{K}^{-1}+m\left(\mathbf{A}+\sigma^{2}\mathbf{I}\right)^{-1}
\end{eqnarray*}

which involves $\O\left(n^{3}+mn^{2}\right)$ computations, and:

\begin{eqnarray*}
E\left(\mathbf{g}|\mathbf{y},\bt,\sigma^{2}\right) & = & \sigma^{-2}\mathbf{Q}_{g}^{-1}\mathbf{y}\\
Cov\left(\mathbf{g}|\mathbf{y},\bt,\sigma^{2}\right) & = & \mathbf{Q}_{g}^{-1}\\
\mathbf{Q}_{g} & = & \bS^{-1}+\sigma^{-2}\mathbf{I}
\end{eqnarray*}

where $\mathbf{Q}_{g}$ is rQK and hence all computations are also
$\O\left(n^{3}+mn^{2}\right)$. Simultaneous confidence bands can
be obtained from the diagonal elements of $\mathbf{Q}_{f}^{-1}$ and
$\mathbf{Q}_{g}^{-1}$.

\subsubsection{Fast updates\label{sub:Fast-updates}}

``Fast'' hyperparameter updates are often available in Gaussian
process models \citep{Neal:MCMCEnsembleOfStates}, in the sense that
one may quickly recompute the marginal likelihood $p(\mathbf{y}|\bt')$
from the current value $p(\mathbf{y}|\bt)$. Usually a fast update
means being able to skip a $\O\left(n^{3}\right)$ factorisation,
and this may be done for rQK matrices as well. For example, the matrices
$\mathbf{A}$, $\mathbf{A}+\tau\mathbf{I}$ and $\mathbf{\alpha}\mathbf{A}$
(with $\alpha>\text{0}$) all have the same eigenvectors, which means
one can update certain hyperparameters without the need to recompute
the square root of $\bS$ from scratch. Although this strategy may
result in speed-ups, it requires painstaking implementation and we
have not explored it further (we note however that fast updates are
also linked to fast gradient computations, since eq. \ref{eq:grad-marginal-lik}
often simplifies). It may be worthwhile when used in combination with
Neal's \citeyearpar{Neal:MCMCEnsembleOfStates} strategy of separating
fast and slow variables.

\subsubsection{Results}

\paragraph{Specialised vs. generic factorisation methods }

We first checked that implementing our specialised matrix factorisations
is indeed worth the trouble (for reasonable problem sizes). Generic
matrix factorisation techniques have been greatly optimised over the
years and are often faster than an inefficient implementation of a
specialised routine. 

We implemented our matrix factorisation formulae in R and measured
computing times for Gaussian densities $p(\mathbf{x}|\bS)$. The non-specialised
routine needs to first form the matrix $\bS=\mbox{rQK}\left(\mathbf{A},\mathbf{K}\right)$
explicitly, and uses Cholesky factorisation to evaluate the density.
Our specialised implementation comes under two variants, one based
on Cholesky factors, and the other based on eigendecompositions, as
explained in section \ref{sub:A-fast-algorithm-square-roots}. The
algorithms were implemented in R and all benchmarks were executed
on a standard desktop PC running R 3.0.1 under Ubuntu.

Figure \ref{fig:Relative-efficiency} and \ref{fig:Relative-eff-loglog}
summarise the results. We varied $m$, the number of functions, for
different values of $n$, the grid size. General-purpose factorisation
scales with $\O\left(m^{3}n^{3}\right)$, compared to our method,
which scales with $\O\left(n^{3}+mn^{2}\right)$. What we expect,
and what Fig. X verifies, is that our method should scale much better
with $m$, as indeed it does: it is faster for all but the smallest
problem sizes (e.g. with $n=10$ and $m=4$ our method is not worth
the bother). With $n=100$ we do better even with $m$ as low as 2.
What the results also underscore is the relative inefficiency of eigenvalue
decompositions, relative to Cholesky factorisations. The eigenvalue
decomposition remains of interest for regular grids, since in this
case the Fourier decomposition can be used to factorise the covariance
matrix in $\O\left(n\log n\right)$ operations \citep{Paciorek:BayesSmoothGPUsingFourierBasisFunctions},
or if iterative methods are used in order to get an approximate factorisation
\citep{Saad:IterMethodsSparseLinearSystems}. 

\begin{figure}
\begin{centering}
\includegraphics[width=10cm]{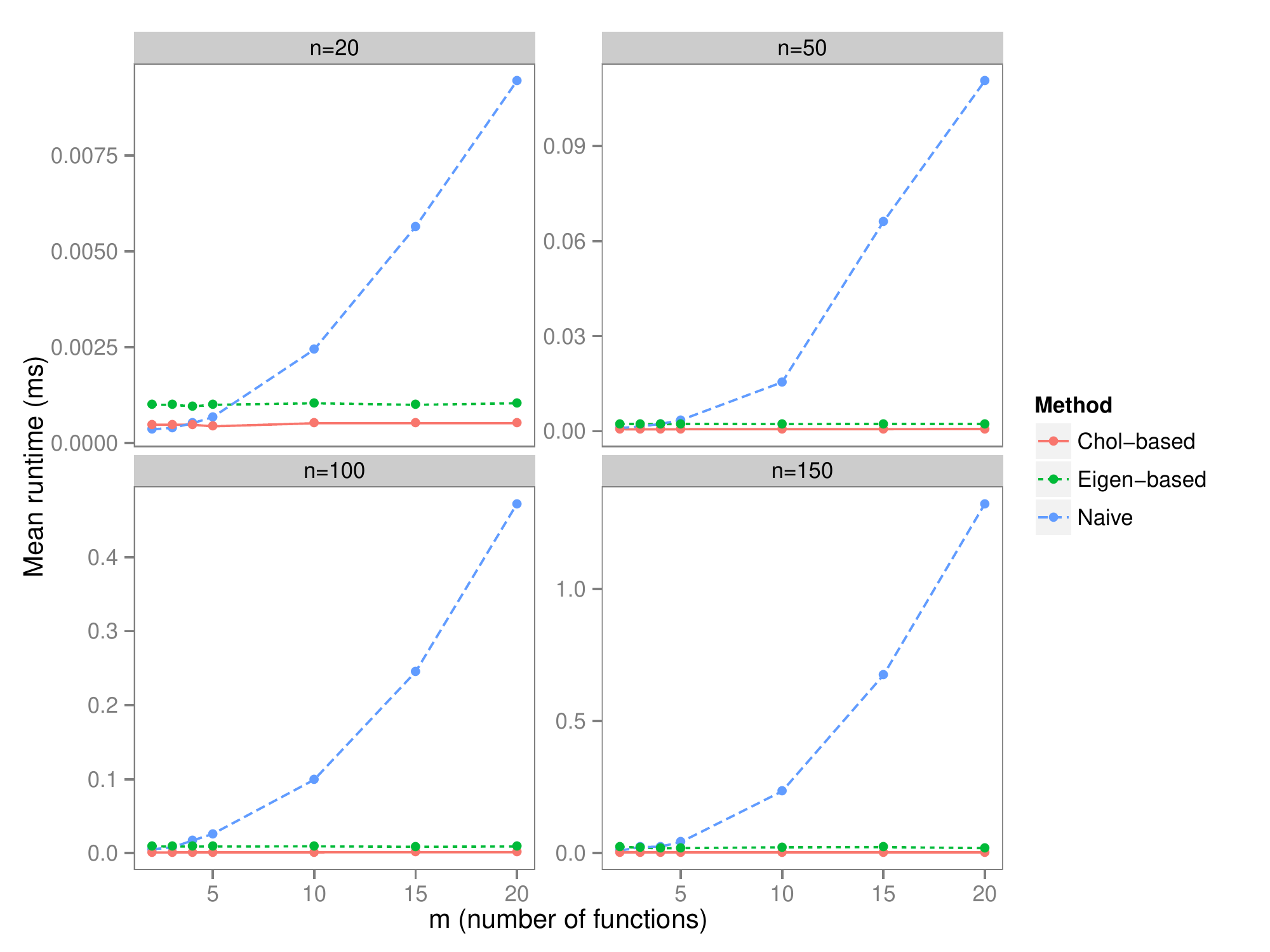}
\par\end{centering}

\caption{Relative efficiency of specialised versus generic methods for computing
values of a multivariate Gaussian density. The blue line (``Naïve'')
represents the runtime of a method based on a generic Cholesky factorisation
of the covariance matrix. The other two lines represent runtimes of
the specialised algorithms described here. One is based on an eigenvalue
decomposition, the other on a Cholesky factor. Specialised methods
are orders of magnitude faster for realistic problem sizes. \label{fig:Relative-efficiency}.
See also Fig \ref{fig:Relative-eff-loglog}.}
\end{figure}

\begin{center}
\begin{figure}
\begin{centering}
\includegraphics[width=10cm]{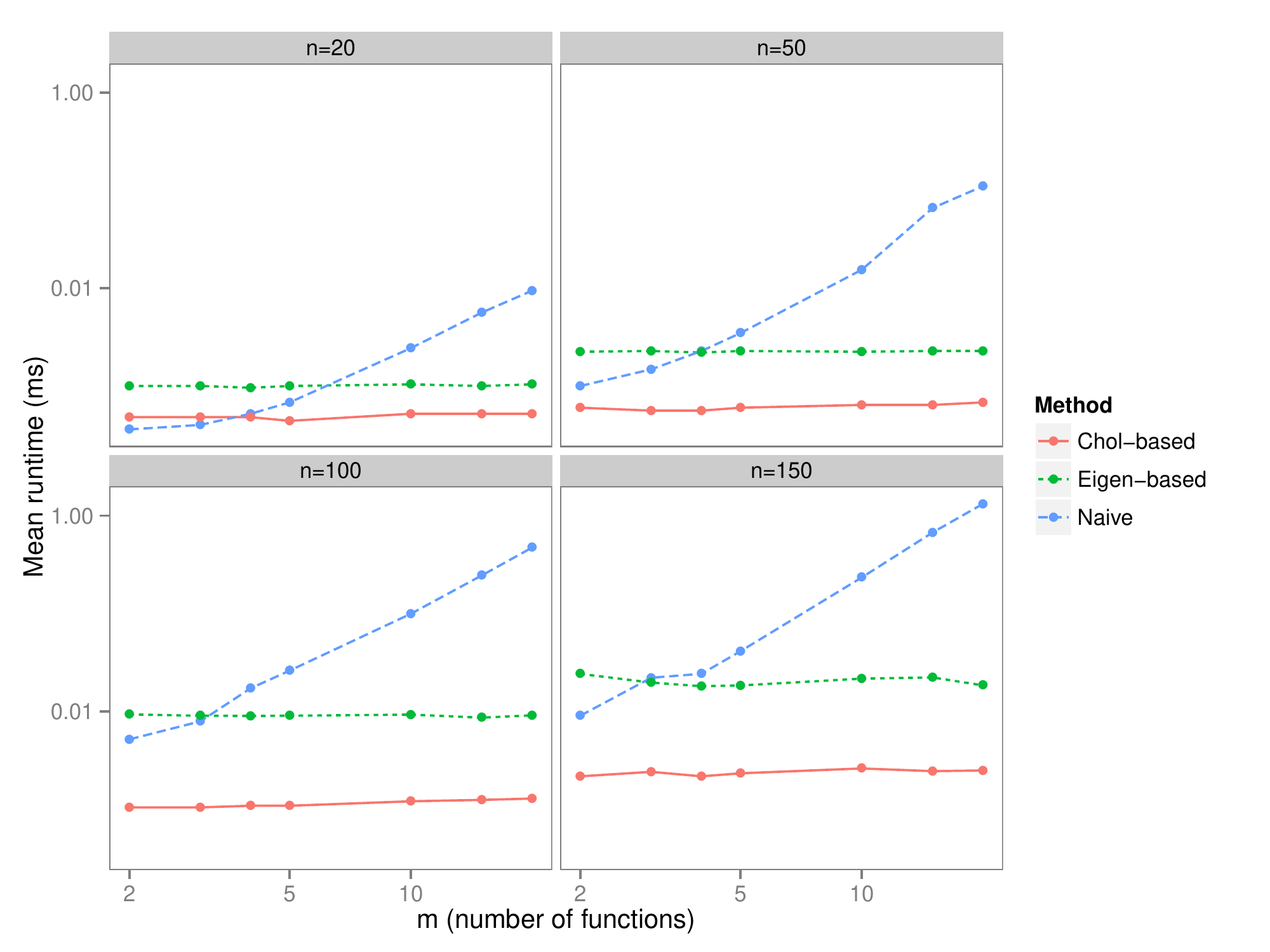}
\par\end{centering}

\caption{Same data as in fig. \ref{fig:Relative-efficiency}, replotted in
log-log coordinates. \label{fig:Relative-eff-loglog}}

\end{figure}

\par\end{center}

\paragraph{Illustration in a joint smoothing problem}

We illustrate the application of our method in a joint smoothing problem.
We generated data according to the following model:

\begin{eqnarray*}
f(x) & = & \sin\left(12x\right)+\sin\left(24x\right)\\
g_{i}\left(x\right) & = & f\left(x\right)+w_{i1}\mbox{cos}\left(6x\right)+w_{i2}\mbox{cos}\left(3x\right)\\
w_{ik} & \sim & \N\left(0,4\right)\\
y_{ij} & = & g_{i}\left(x_{j}\right)+\epsilon_{ij}\\
\epsilon_{ij} & \sim & \N\left(0,\sigma^{2}\right)
\end{eqnarray*}

meaning that the individual functions were smooth perturbations around
a fixed mean function $f(x)$, as shown on Fig. \ref{fig:Gaussian-data}. 

\begin{center}
\begin{figure}
\begin{centering}
\includegraphics[width=10cm]{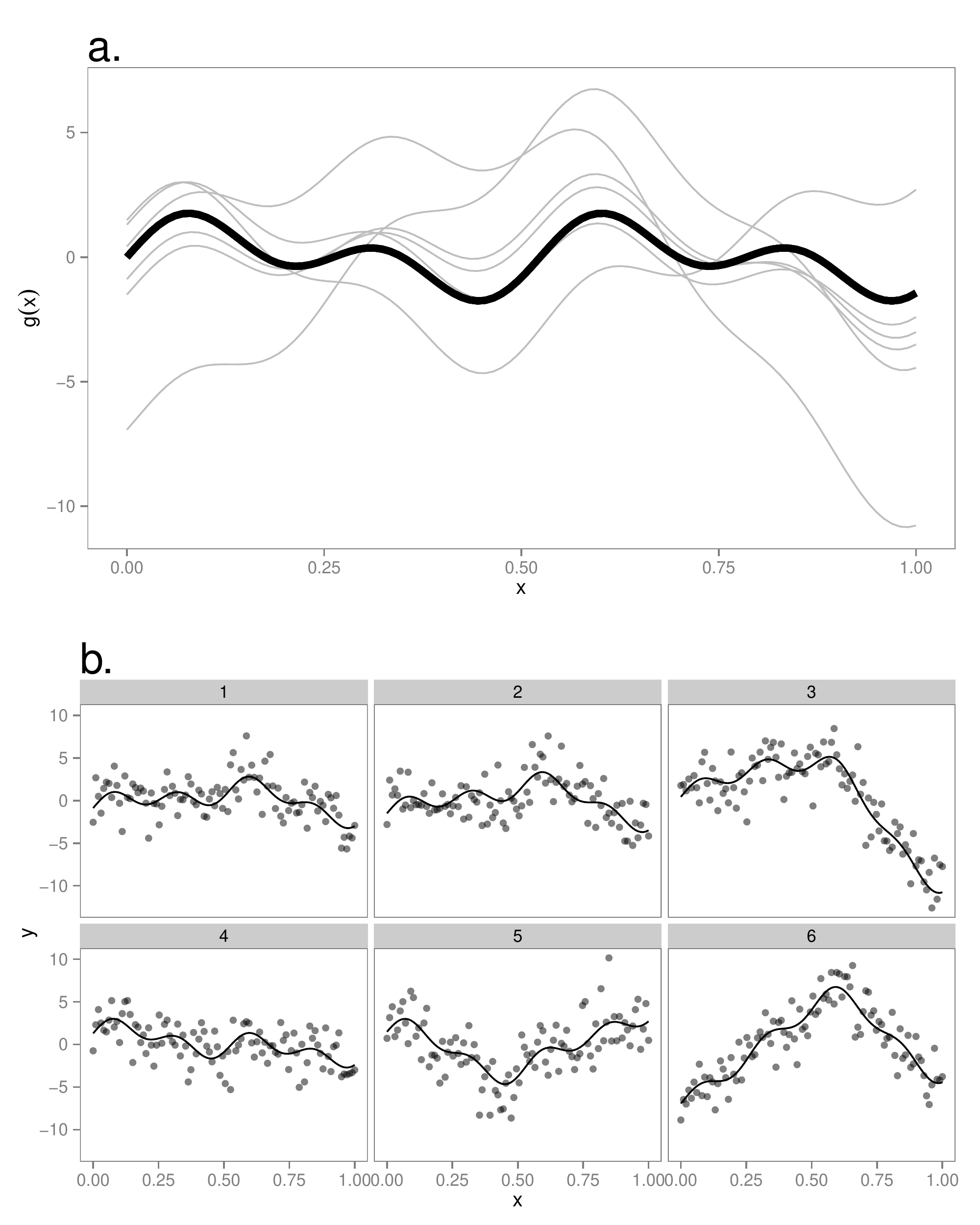}
\par\end{centering}

\caption{Data used in the simulations for the joint smoothing problem. \textbf{a}.
The mean function $f(x)$ (thick lines) with 6 variants $g_{1}\left(x\right)\ldots g_{6}\left(x\right)$
(thin lines) \textbf{b. }The data are noisy observations of the $g_{i}$'s
(dots represent datapoints, lines correspond to the latent functions
$g_{1}\left(x\right)\ldots g_{6}\left(x\right)$). In this example
$n=100$ and the grid is regular.\label{fig:Gaussian-data}}
\end{figure}

\par\end{center}

We use a regular grid of points $x_{i}\in(0,1)$.

We fitted a generic additive Gaussian process model, specifically:

\begin{eqnarray*}
f(x) & \sim & GP\left(0,\kappa_{K}\right)\\
g_{i}\left(x\right) & \sim & GP\left(0,\kappa_{A}\right)\\
y_{ij} & = & g_{i}\left(x_{j}\right)+\epsilon_{ij}
\end{eqnarray*}

The covariance functions $\kappa_{K}$ and $\kappa_{A}$ are Matern
covariance functions, with fixed smoothness parameter $\nu=\frac{5}{2}$.
The hyperparameters to estimate are the two length-scale parameters,
the two Matern variance parameters, and the noise variance $\log\sigma^{2}$.
We note $\bt$ the vector of hyperparameters. The most straightforward
estimation strategy is to maximise $\log\, p\left(\mathbf{y}\vert\bt\right)$
(maximum likelihood, ML), or alternatively $\log\, p\left(\mathbf{y}\vert\bt\right)+\log p\left(\bt\right)$
(maximum a posteriori, MAP). MCMC can naturally also be used to sample
from $p(\bt|\mathbf{y})$, and we compared both methods.

For MAP/ML we found that a quasi-Newton method (limited memory BFGS,
\citealp{LiuNocedal:OnTheLimMemBFGSLargeScaleOpt}) works very well
(we used the fast analytical gradient described above), converging
typically in \textasciitilde{} 30 iterations. The standard Gaussian
approximation of $p(\bt|\mathbf{y})$ at the mode $\bt^{\star}$ can
be computed by finite differences using the analytical gradient. For
MCMC we used standard Metropolis-Hastings, with a proposal distribution
corresponding to the Gaussian approximation of the posterior. When
the data are informative enough the posterior is well-behaved and
no further tuning is necessary, but problems can arise if e.g. the
noise is very high and there are few measurements. 

MAP inference is extremely fast (see fig. \ref{fig:Runtime-of-MAP}),
and remains tractable with extremely large datasets: with a grid size
of $n=1,000$ and $m=100$ latent functions, MAP inference takes a
little over two minutes on our machine. MCMC is an order of magnitude
slower but still rather tractable (it is likely that large speed-ups
could be obtained using a modern Hamiltonian Monte Carlo method).
Below we compare the results of MCMC and MAP inference for the data
shown on fig. X. 

Given a MAP estimate of the hyperparameters, inference for $f$ and
the $g_{i}$'s proceeds using the conditional posteriors $p(\mathbf{f}|\bt^{\star},\mathbf{y})$
and $p(\mathbf{g}|\bt^{\star},\mathbf{y})$. This tends to underestimate
the uncertainty in the latent processes, but the underestimation is
not dramatic in this example. The simultaneous confidence bands for
MAP and full MCMC inference are compared on fig. \ref{fig:Fitted-mean-function}
(appendix \ref{sub:Computing-simultaneous-confidence-bands} explains
how the confidence bands were estimated). 

\begin{center}
\begin{figure}
\begin{centering}
\includegraphics[width=14cm]{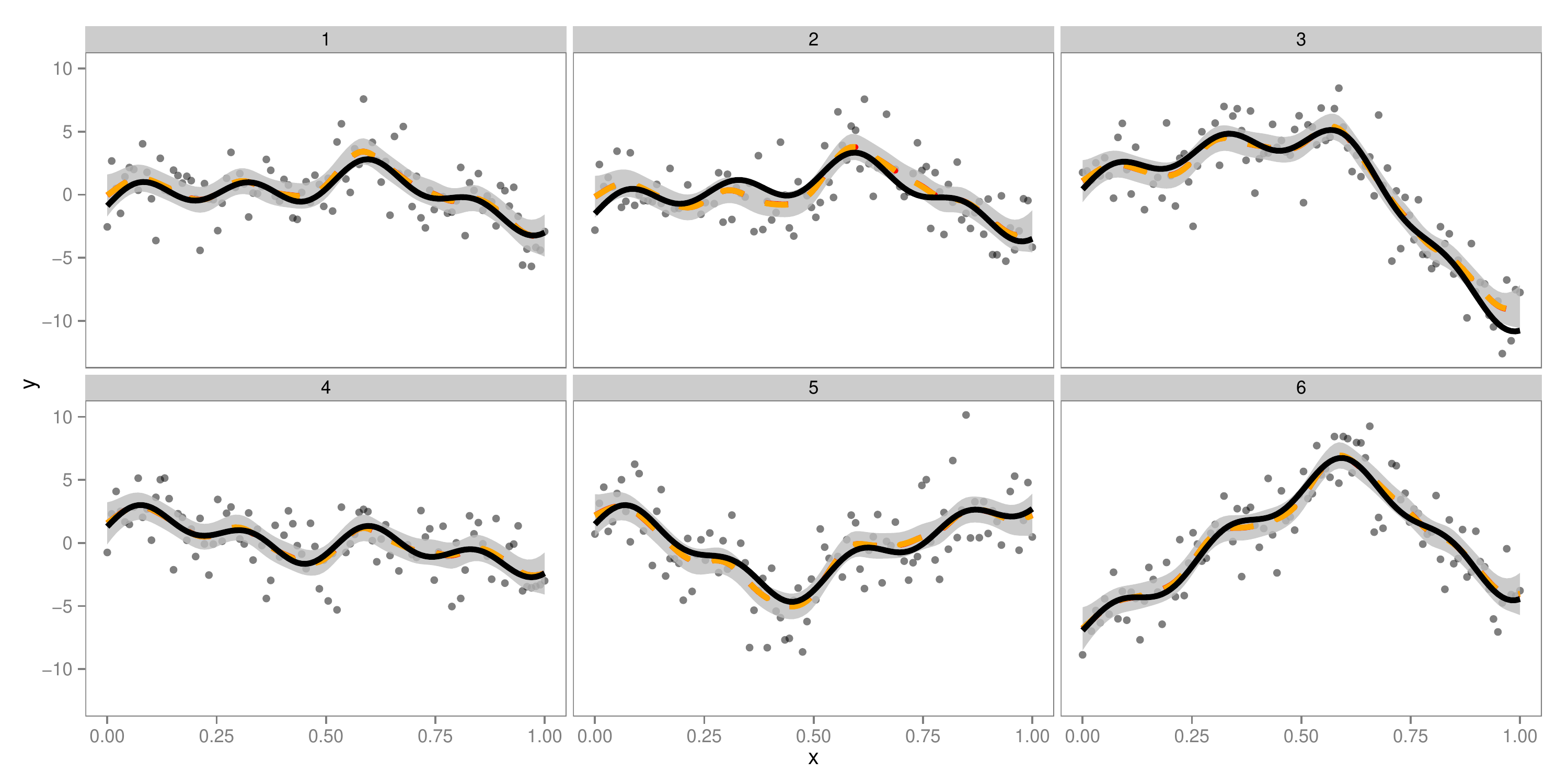}
\par\end{centering}

\caption{Fitted functions (posterior mean or MAP) vs. ground truth for the
data shown in Fig. \ref{fig:Gaussian-data}. Ground truth (actual
values of $g_{1}\left(x\right)\ldots g_{6}\left(x\right)$) is shown
as a thick black lines, MAP and posterior mean fits are shown as orange
and red dotted lines. MAP and posterior mean overlap nearly completely.
The shaded region corresponds to a pointwise posterior confidence
band from MCMC output. \label{fig:Fitted-functions-g_i}}
\end{figure}

\par\end{center}

\begin{center}
\begin{figure}
\begin{centering}
\includegraphics[width=10cm]{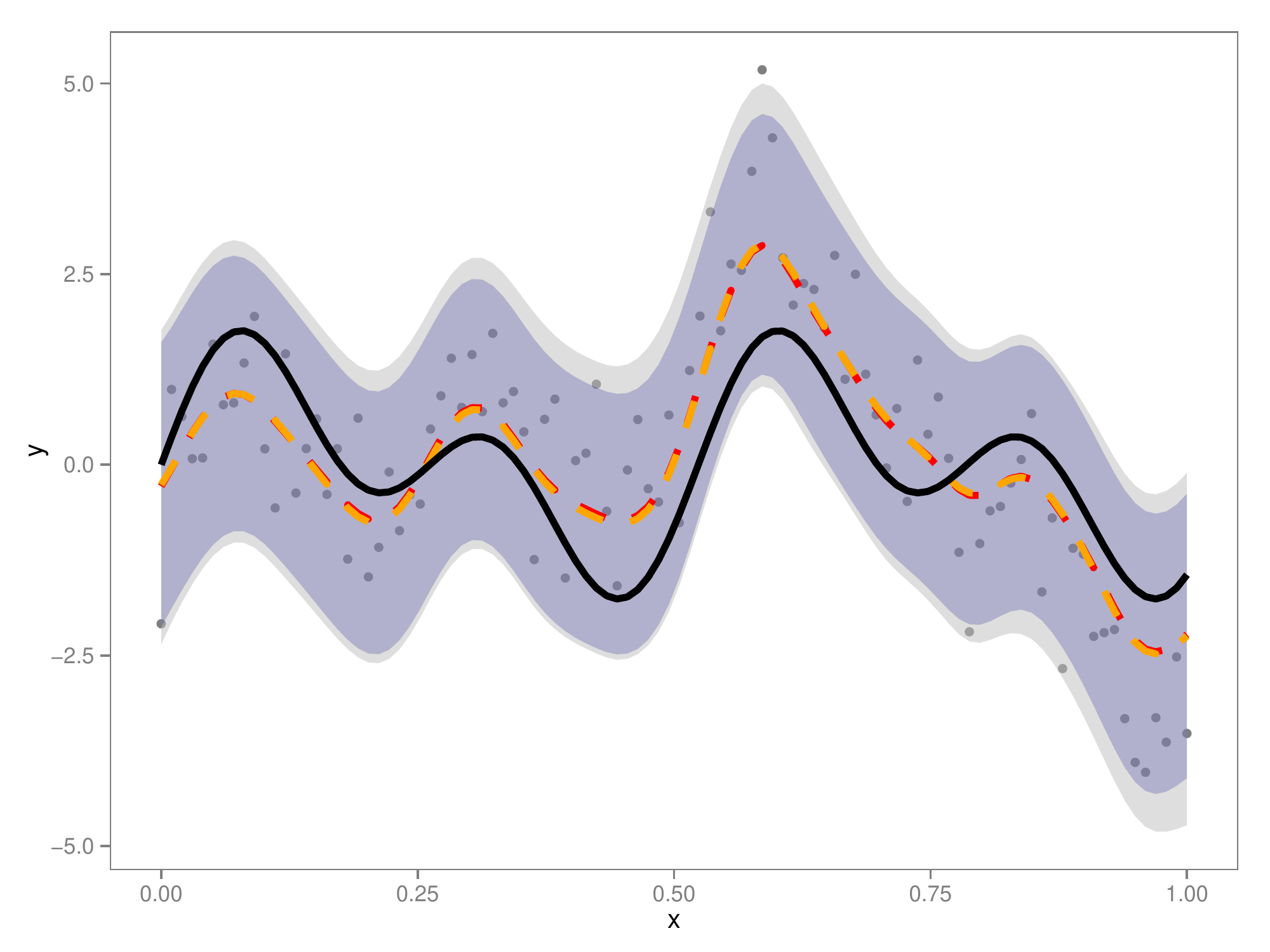}
\par\end{centering}

\caption{Posterior inference for the mean function $f\left(x\right)$ (same
data as shown in fig. \ref{fig:Gaussian-data} and \ref{fig:Fitted-functions-g_i}).
Individual dots represent mean observation values at a given point
(i.e. $\frac{1}{m}\sum_{i=1}^{m}y_{ij}$, which equals $f(x_{j})$
in expectation). The true value of $f\left(x\right)$ is shown as
a thick black line, MAP and posterior estimates are in orange and
red respectively. Two 95\% confidence bands are displayed: the lighter
one is from MCMC inference, the darker one is an approximate interval
obtained from the MAP estimate. The latter neglects posterior uncertainty
over the hyperparameters and is therefore narrower, although the inference
is not dramatic in this example, and both intervals contain the true
value. Note that under the functional additive model the mean observations
are not distributed i.i.d. around the mean function, making inference
more difficult. \label{fig:Fitted-mean-function}}

\end{figure}

\par\end{center}

\begin{figure}
\begin{centering}
\includegraphics[width=10cm]{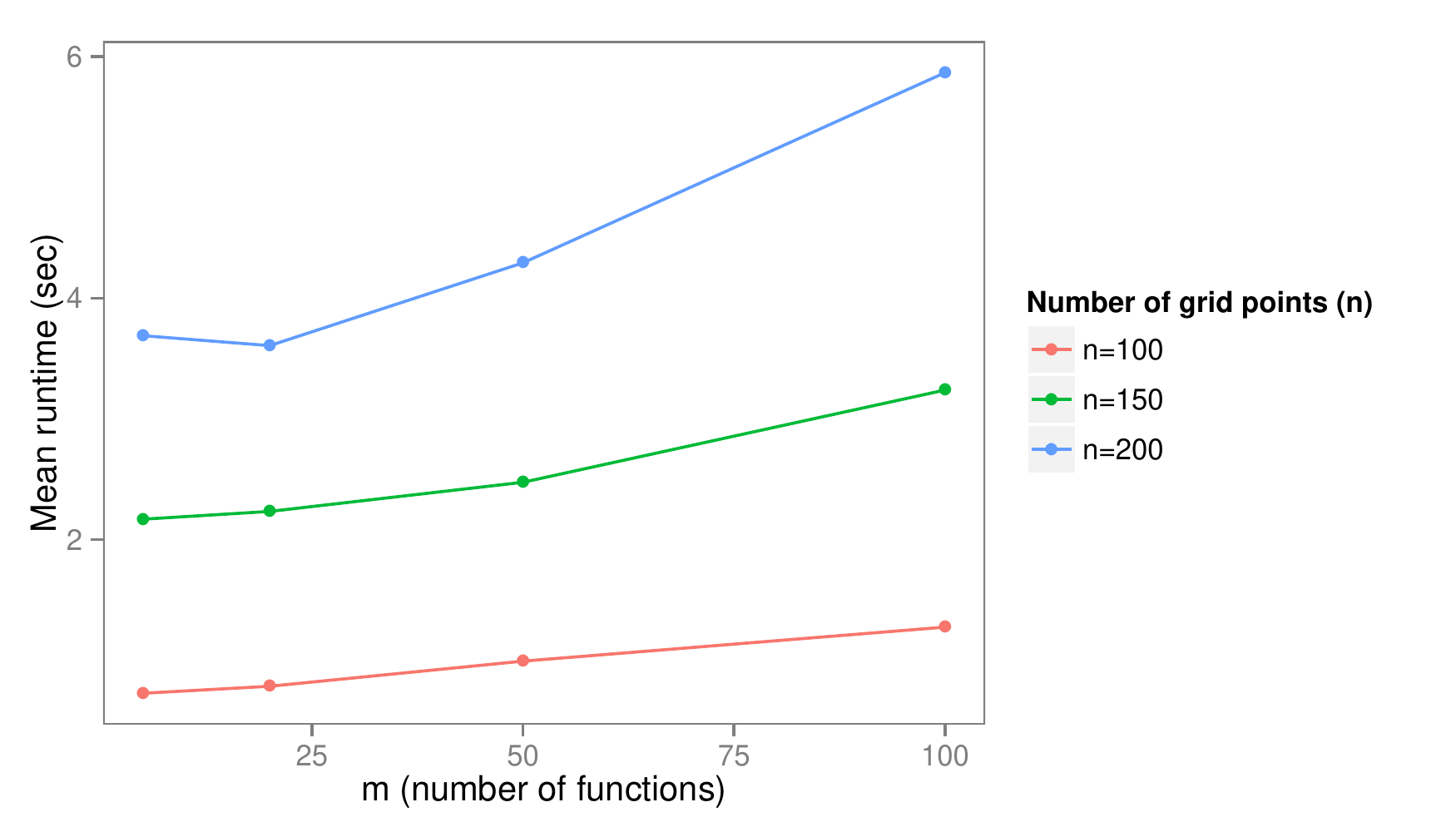}
\par\end{centering}

\caption{Runtime of MAP estimation of the hyperparameters (Gaussian case).
We simulated data with different grid sizes ($n$) and different number
of functions ($m$), and used numerical optimisation (L-BFGS-B) to
estimate the hyperparameters. We show here average runtimes obtained
over 30 repetitions with different random datasets. The method shows
excellent scaling with $m$, the largest cost factor being again the
$\O\left(n^{3}\right)$ matrix factorisations.\label{fig:Runtime-of-MAP}}
\end{figure}

\subsection{Non-Gaussian observations: application to spike count data\label{sub:Spike-train-data}}

In this section we show how to apply our technique to LGMs with non-Gaussian
likelihoods, using an example where the data are Poisson counts. The
Poisson LGM is one of the most popular kinds, especially in spatial
statistics applications \citep{Illian:StatAnalysisSpatPointPatterns,Barthelme:ModelingFixationLocationsSpatialPointProc}.
Here we focus on a application to neuroscience, specifically spike
count data. 

Neurons communicate by sending electrical signals (action potentials,
also called spikes) to one another. Each spike is a discrete event,
and the occurrence of spikes may be recorded by placing an electrode
in the vicinity of a cell. Neurons respond to stimuli by modulating
how much they spike. An excited neuron increases its spike rate, an
inhibited neuron decreases it. In experiments where spikes are recorded
a typical task is to determine how the spike rate changes over time
in response to a stimulus. 

We use data provided by \citet{PouzatChaffiol:AutomaticSpikeTrainAnalysis},
which consist in external recordings of a cell in the antennal lobe
of a locust, an area that responds to olfactory stimulation. Part
of the data is shown on Fig. \ref{fig:Spike-train-data}. The animal
was stimulated by an odorant (terpineol), causing a change in the
spike rate of the cell. Stimulation was repeated over the course of
20 successive trials. We look at spiking activity occurring between
-2 and +2 sec. relative to stimulus onset. 

The data are spike counts: we simply count the number of spikes that
occurred in a given time bin. Neurons are noisy , and statistical
models for spike trains are generally variations on the Poisson model
, which assumes that the spike count at time $t$ follows a Poisson
distribution with rate $\lambda\left(t\right)$ \citep{Paninski:StatModelsNeuralEncodingDecoding}.
The goal is to infer the underlying rate $\lambda\left(t\right)$,
which may vary spontaneously, drift from one trial to the next, and
change in response to the stimulus. 

We set up the following hierarchical model:

\begin{eqnarray*}
f\left(t\right) & \sim & GP\left(0,\kappa\right)\\
g_{i}\left(t\right) & \sim & GP\left(f(t),\kappa\right)\\
\lambda_{i}\left(t\right) & = & \exp\left(g_{i}\left(t\right)\right)\\
y_{ij} & \sim & \mbox{Poi}\left(\delta\lambda_{i}\left(t_{j}\right)\right)
\end{eqnarray*}

where $\delta$ is the time bin and $\delta\lambda_{i}\left(t_{j}\right)\approx\int_{t_{j}-\delta/2}^{t_{j+\delta/2}}\lambda_{i}\left(t\right)\mbox{d}t$
is the expected spike count in the $j$-th bin on the $i$-th trial. 

\begin{figure}
\begin{centering}
\includegraphics[width=8cm]{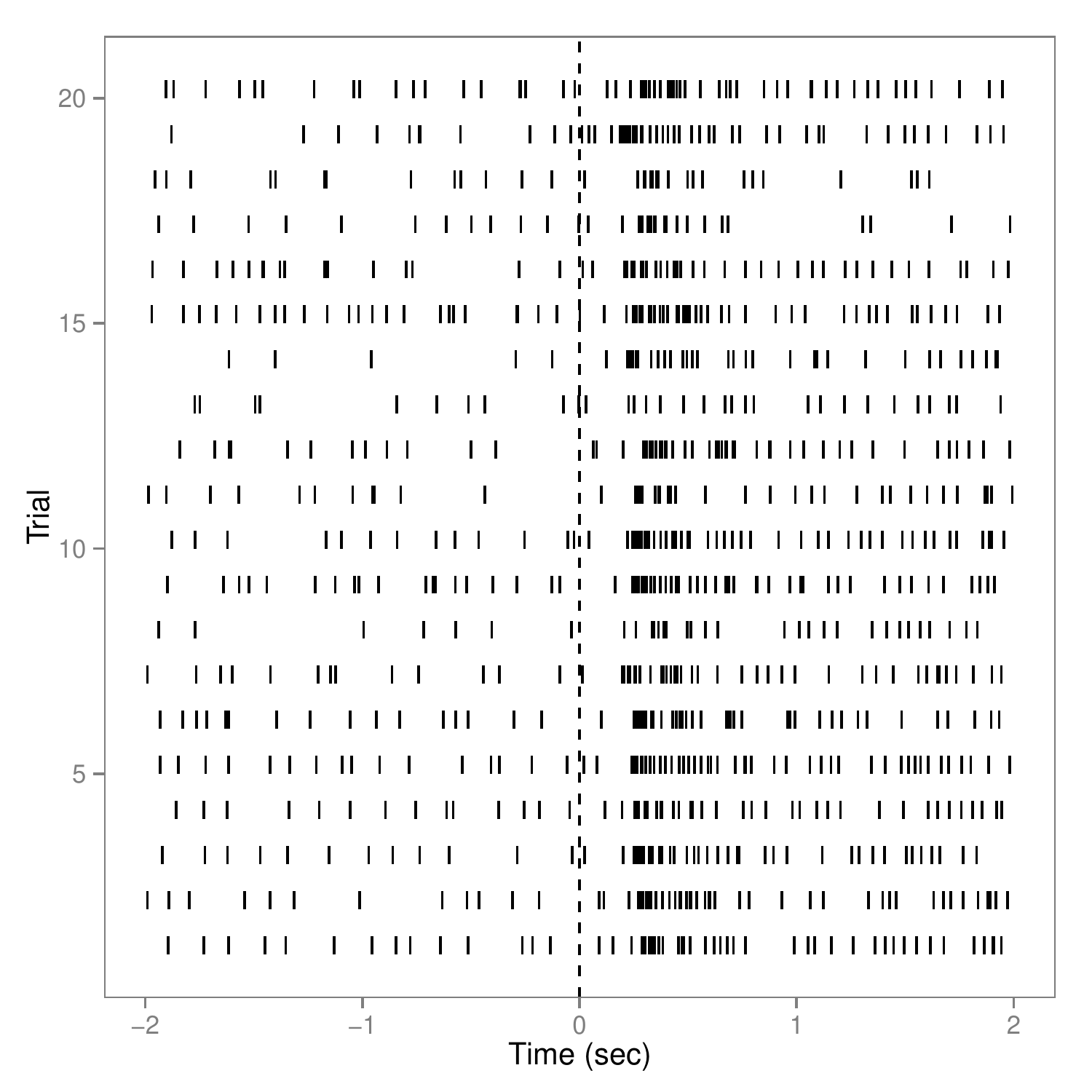}
\par\end{centering}

\caption{Spike train data used in the example: spiking responses of an antennal
lobe neuron. Each vertical bar represents a spike, with successive
trials stacked up vertically. Time $t=0$ represents stimulus onset
(stimulation with an odorant). The cell has an excitatory response:
an increase in spike rate is visible around 200ms after stimulus onset.
These data are available in the STAR package in R (\citealp{PouzatChaffiol:AutomaticSpikeTrainAnalysis},
dataset e060817terpi, cell \#1). \label{fig:Spike-train-data}}

\end{figure}

\subsection{Computing the Laplace approximation}

Contrary to the Gaussian case, for generic LGMs the posterior marginals
over the hyperparameters ($p(\mathbf{y}|\bt)$) cannot be easily computed.
The Laplace approximation usually gives sensible results \citep{Rue:INLA},
although Expectation Propagation provides a superior if more expensive
alternative \citep{NickishRasmussen:ApproxGaussianProcClass}. 

The Laplace approximation is given by:

\begin{equation}
\mathcal{L}\left(\bt\right)=f(\xs,\bt)-\frac{1}{2}\log\det\mathbf{H}\left(\xs,\bt\right)\label{eq:laplace-approx}
\end{equation}

where 
\begin{equation}
f(\mathbf{x},\bt)=\log p\left(\mathbf{y}\vert\mathbf{x}\right)+\log p\left(\mathbf{x}\vert\bt\right)\label{eq:logpost}
\end{equation}

is the unnormalised log-posterior evaluated at its conditional mode
$\xs$ , and $\mathbf{H}\left(\xs,\bt\right)$ is the Hessian of $f(\mathbf{x},\bt)$
with respect to \textbf{x }evaluated at $\xs,\bt$. 

We therefore need to (a) find the conditional mode $\xs$ for a given
value of $\bt$ and (b) compute the log-determinant of the Hessian
at the mode. The latter is possible because the Hessian turns out
to be a QK matrix, and therefore its determinant can be computed in
$\O\left(mn^{3}\right)$ using equation \ref{eq:determinant-QK}.
\begin{prop}
In LGMs the Hessian matrix of the log-posterior over the latent field
is a quasi-Kronecker matrix. \end{prop}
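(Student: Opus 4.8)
The plan is to split the log-posterior $f(\mathbf{x},\bt)=\log p(\mathbf{y}\vert\mathbf{x})+\log p(\mathbf{x}\vert\bt)$ of (\ref{eq:logpost}) into its prior and likelihood terms, show that each contributes a matrix in the QK class (\ref{eq:definition-QK}) — possibly a degenerate one — and then observe that the sum is again QK. Throughout we take the latent field to have the two-level additive structure, so $\mathbf{x}=\mbox{vec}(\mathbf{G})$ with $\mathbf{G}\in\R^{n\times m}$ and $\mathbf{x}\sim\N(0,\bS)$, $\bS=\mbox{rQK}(\mathbf{A},\mathbf{K})$ as in (\ref{eq:marginal-g}); in this ordering the coordinates of $\mathbf{x}$ are grid points nested within functions, matching the $n\times n$ block structure of $\bS$.

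The prior term is the easy one: $\log p(\mathbf{x}\vert\bt)=-\tfrac12\mathbf{x}^t\bS^{-1}\mathbf{x}+\text{const}$, so its Hessian is the constant matrix $-\bS^{-1}$, and by Corollary 2 we have $\bS^{-1}=\mbox{rQK}(\mathbf{A}^{-1},\mathbf{K}')$ with $\mathbf{K}'=\frac{1}{m}((\mathbf{A}+m\mathbf{K})^{-1}-\mathbf{A}^{-1})$; hence $-\bS^{-1}=\mathbf{I}_m\otimes(-\mathbf{A}^{-1})+\mathbf{e}\mathbf{e}^t\otimes(-\mathbf{K}')$ is rQK, in particular QK. The likelihood term is where the LGM assumption enters: the data are conditionally independent given the latent field, each datum depending on just one coordinate of $\mathbf{x}$, so $\log p(\mathbf{y}\vert\mathbf{x})=\sum_{k=1}^{nm}\log f(y_k\vert x_k)$ separates over coordinates and its Hessian is diagonal, with $k$-th entry $\partial_{x_k}^2\log f(y_k\vert x_k)$ evaluated at $\mathbf{x}$. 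Grouping the $nm$ coordinates into the $m$ consecutive blocks of length $n$ induced by $\mbox{vec}(\mathbf{G})$, this reads $\mbox{bdiag}(\mathbf{D}_1,\dots,\mathbf{D}_m)$ with each $n\times n$ block $\mathbf{D}_j$ diagonal — and a block-diagonal matrix is itself of the QK form (\ref{eq:definition-QK}) with the Kronecker term absent.

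Combining the two, $\mathbf{H}(\mathbf{x},\bt)=\mbox{bdiag}(\mathbf{D}_1,\dots,\mathbf{D}_m)-\bS^{-1}=\mbox{bdiag}(\mathbf{D}_1-\mathbf{A}^{-1},\dots,\mathbf{D}_m-\mathbf{A}^{-1})+\mathbf{e}\mathbf{e}^t\otimes(-\mathbf{K}')$, where the $\mathbf{I}_m\otimes\mathbf{A}^{-1}$ part of $\bS^{-1}$ has been folded into the block-diagonal term; this is exactly the QK form (\ref{eq:definition-QK}) with $\mathbf{A}_j=\mathbf{D}_j-\mathbf{A}^{-1}$, $\mathbf{u}=\mathbf{v}=\mathbf{e}$ and Kronecker factor $-\mathbf{K}'$, proving the claim. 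The only delicate point is this last piece of bookkeeping — verifying that the coordinate orderings agree so that adding the block-diagonal likelihood Hessian to the rQK prior Hessian keeps us inside the QK class — and it is immediate once the orderings are matched; the substantive inputs are Corollary 2 for the prior term and conditional independence for the likelihood term, neither of which is hard. Note that $\mathbf{H}$ is QK but typically not rQK, since the curvatures $\mathbf{D}_j$ differ across functions; this is precisely why the general QK determinant formula (\ref{eq:determinant-QK}), rather than Theorem 1, is what powers the $\O(mn^3)$ Laplace computation in the sequel. (In the Gaussian case $\mathbf{D}_j=-\sigma^{-2}\mathbf{I}$ for all $j$, and $\mathbf{H}$ is in fact rQK.)
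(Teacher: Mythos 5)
Your proof is correct and follows essentially the same route as the paper: Hessian $=$ diagonal likelihood term (by conditional independence) plus $-\bS^{-1}$, which is rQK by Corollary 2, and the sum of a diagonal and an rQK matrix is QK. You merely make explicit the final bookkeeping step that the paper states without verification (absorbing $-\mathbf{I}_m\otimes\mathbf{A}^{-1}$ into the block-diagonal part so the result matches the QK form of eq.~(\ref{eq:definition-QK}) with $\mathbf{u}=\mathbf{v}=\mathbf{e}$), which is a welcome addition.
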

\begin{proof}
The second derivative of $f(\mathbf{x},\bt)$ (eq. \ref{eq:logpost})
with respect to $\mathbf{x}$ is given by:

\begin{eqnarray*}
\nabla_{\mathbf{x}}^{2}f\left(\mathbf{x},\bt\right) & = & \nabla_{\mathbf{x}}^{2}\log p\left(\mathbf{y}\vert\mathbf{x}\right)+\nabla_{\mathbf{x}}^{2}\log p\left(\mathbf{x}\vert\bt\right)\\
 & = & \mathbf{H}_{l}-\bS^{-1}
\end{eqnarray*}

where the Hessian of the log-likelihood \textbf{$\mathbf{H}_{l}$
}is diagonal (since each $y_{i}$ depends only on $x_{i}$) and $\Sigma^{-1}$
is rQK by Corrolary 2. The sum of an rQK and a diagonal matrix is
a QK matrix. 
\end{proof}
However, before we can do anything with the Hessian at the mode, we
need to find the mode. Proposition 6 implies that one may use Newton's
method \citep{NocedalWright:NumericalOptim} to do so, since $\mathbf{H}^{-1}\nabla f$
can be computed using equation \ref{eq:QK-inverse}. However, at each
Newton step we need to solve $m$ linear systems of size $n\times n$,
making it relatively expensive. We found that quasi-Newton methods
\citep{NocedalWright:NumericalOptim}, which do not use the exact
Hessian, may be more efficient provided that one chooses a smart parametrisation. 

Contrary to Newton's method, which is invariant to linear transformations
of the parameters, the performance of quasi-Newton methods depends
partly on the conditioning of the Hessian. It is therefore worthwhile
finding a transformation $\mathbf{z}=\mathbf{Mf}$ so that $\mathbf{M}\mathbf{H}\mathbf{M}^{t}$
is well-scaled. One option is to use the whitened parametrisation,
i.e. $\mathbf{M}=\bS^{-1/2}=\mathbf{G}^{-t}$ (eq. \ref{eq:whitening-transform}),
in which case 
\begin{equation}
\mathbf{M}\mathbf{H}\mathbf{M}^{t}=\mathbf{G}^{-t}\left(\bS^{-1}+\mathbf{H}_{l}\right)\mathbf{G}^{-1}=\mathbf{I}+\mathbf{G}^{-t}\mathbf{H}_{l}\mathbf{G}^{-1}\label{eq:making-prior-diagonal}
\end{equation}

Since one of the problems with Gaussian process priors is that $\bS$
can have very poor conditioning, we might hope that the above transformation
would help. We found that it does, but what is generally even more
effective is to take into account the diagonal values of $\bS^{-1/2}\mathbf{H}_{l}\bS^{-1/2}$
as well (preconditioning the problem). Further discussion of the issue
can be found in Appendix \ref{sub:Preconditioning-for-quasi-Newton}.
The Quasi-Newton method we use is limited memory BFGS in the standard
R implementation (\emph{optim}).

\subsection{Maximising the Laplace approximation}

Once we have a way of approximating $p(\mathbf{y}|\bt)$, similar
strategies apply in the general LGM case as do in the linear-Gaussian
case. We may just require an approximate MAP estimate, or we may wish
to approximately integrate out the uncertainty in $\bt$ using INLA
\citep{Rue:INLA} or exact MCMC in a pseudo-marginal sampler \citep{FilliponeGirolami:ExactApproxBayesInferenceGP}.
In any case the first step is to find the maximum of the Laplace approximation
$\mathcal{L}\left(\bt\right)$. Most optimisation methods will work,
but it helps if the gradient $\nabla\mathcal{L}\left(\bt\right)$
can be computed. It turns out that this is possible albeit rather
more expensive than in the Gaussian case, and the derivation is given
in Appendix \ref{sub:Derivative-of-the-Laplace-approx}.

\subsection{Results}

We first used the same point process models as in section \ref{sub:Gaussian-observations}
(Matern 5/2 for both $\mathbf{K}$ and $\mathbf{A}$, with four hyperparameters).
We used time bins of 2ms, giving a total of 200 time bins per spike
train and 4,000 latent variables in total. We implemented the complete
algorithm in R, and maximising the Laplace approximation takes a very
reasonable 40 sec. The fitted intensity functions $\lambda_{i}\left(t\right)=\exp\left(g_{i}\left(t\right)\right)$
are shown in fig. \ref{fig:Fitted-intensity-functions-stationary}.
There is a sharp increase in rate following stimulus onset, but little
variability across trials. Stationary Gaussian processes such as the
Matern process are not very well adapted to the estimation of functions
that jump about a lot, and we feared that the ripples visible in the
fits (e.g. prior to stimulus onset) could be artefactual. In other
words, in order not to oversmooth around the jump, the model undersmooths
in other regions. 

As an alternative, we formulated a model that allows nonstationarity
in the mean function $f(t)$, assuming:

\begin{eqnarray*}
f(t) & = & a(t)+m(t)b(t)\\
m(t) & = & \exp(-\frac{\left(t-t_{0}\right)^{2}}{s_{t}^{2}})
\end{eqnarray*}

where $a(t)$ and $b(t)$ are two independent Matern processes, and
$m(t)$ is a mask that limits the effect of $b(t)$ to the time around
stimulus onset. The net effect is that extra variability is allowed
around the time of the jump (see \citealp{Bornn:ModelNonstatProcessesDimensionExpansion}
for a more sophisticated approach to nonstationarity). We set $t_{0}=0.3$
and $s_{t}=0.2$ by eye. Adding hyperparameters for $b(t)$ brings
the total number of hyperparameters to 6, and fitting the nonstationary
model takes about 2 minutes on our machine. The results, shown on
fig. \ref{fig:Fits-nonstationary}, indicate that pre-onset ripples
are indeed most likely artefactual. 

\begin{figure}

\begin{centering}
\includegraphics[width=10cm]{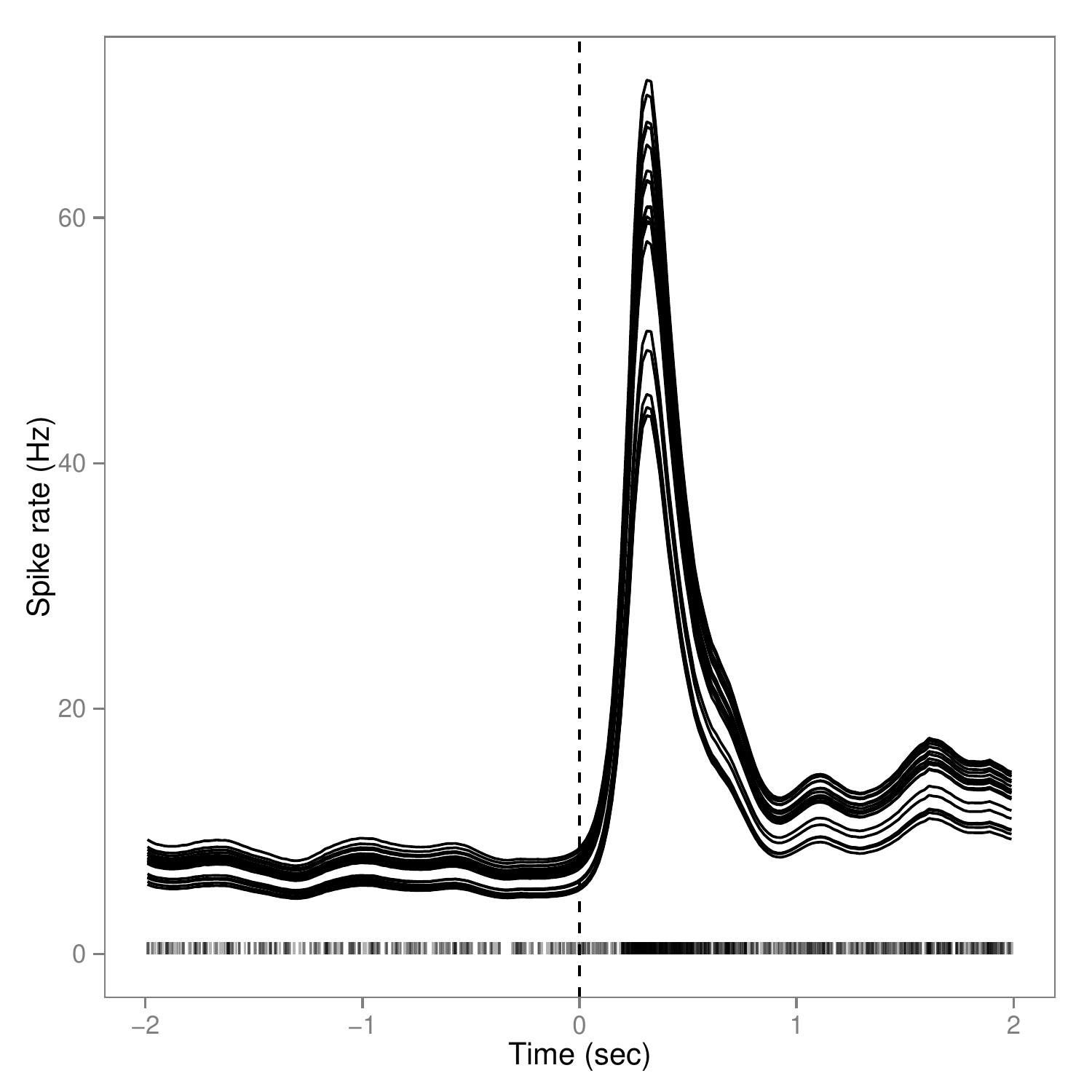}
\par\end{centering}

\caption{Fitted intensity functions for the spike train data of fig. \ref{fig:Spike-train-data},
using stationary kernels. Each line represents the fitted function
for a given trial. The strip plot at the bottom shows every spike
in the dataset (over all trials). The ``ripples'' visible before
stimulus onset and 1 sec. after onset are probably artefacts of the
stationarity assumption, see next figure.\label{fig:Fitted-intensity-functions-stationary}}

\end{figure}

\begin{figure}
\begin{centering}
\includegraphics[width=10cm]{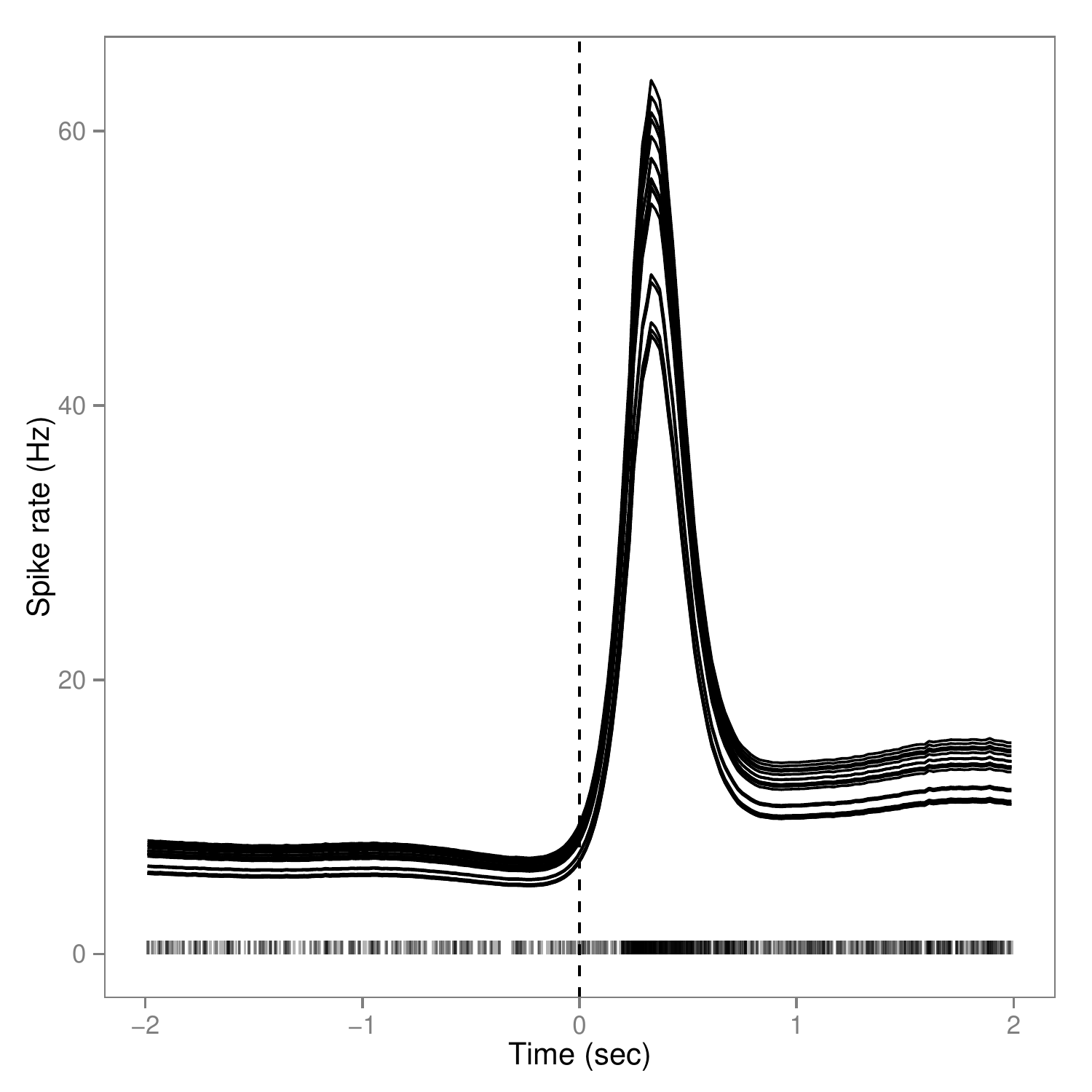}
\par\end{centering}

\caption{Same data as in fig. \ref{fig:Fitted-intensity-functions-stationary},
fitted with a non-stationary kernel (see text). The ripples visible
in fig. \ref{fig:Fitted-intensity-functions-stationary} have disappeared.\label{fig:Fits-nonstationary}}

\end{figure}

\section{Conclusion}

We have shown how restricted quasi-Kronecker matrices can be block
rotated and factorised, and how this enables efficient inference in
the two-level functional additive model. A similar closed-form factorisation
for \emph{generic }quasi-Kronecker matrices eludes us despite repeated
attempts. That would certainly make an interesting topic for future
work although perhaps not our own. 

Although the algorithms we describe here scale very well with the
number of functions in the dataset, they do not scale very well with
grid size - the dreaded $\O\left(n^{3}\right)$ remains. For regular
grids circulant embeddings (Fourier) approaches can be used \citep{Paciorek:BayesSmoothGPUsingFourierBasisFunctions},
although unfortunately the Laplace approximation becomes hard to compute
in that setting. Iterative solvers \citep{Saad:IterMethodsSparseLinearSystems,Stein:StochApproxScoreFunctionsGPs}
could overcome the problem, but we leave that for future research. 

Quasi-Kronecker and restricted quasi-Kronecker matrices display a
number of appealing properties, and should probably join their block-diagonal,
circulant, and Toeplitz peers among the set of computation-friendly
positive definite matrices.

\section{Appendix }

\subsection{An orthogonal basis for zero-mean vectors\label{sub:An-orthogonal-basis-zero-mean}}

We prove below proposition 5, which we restate:
\begin{prop*}
The matrix $\mbox{\textbf{N}}=\left[\begin{array}{c}
a\mathbf{1}_{m-1}^{t}\\
b+\mathbf{I}_{m-1}
\end{array}\right]$, with $a=\frac{1}{\sqrt{m}}$ and $b=-\left(\frac{1}{\left(m-1\right)}\left(1+\frac{1}{\sqrt{m}}\right)\right)$
is an orthonormal basis for the set of zero-mean vectors $\mathcal{A}_{m}=\left\{ \mathbf{u}\in\R^{m}|\sum u_{i}=0\right\} $.\end{prop*}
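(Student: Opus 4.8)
The plan is to reduce the statement to two linear-algebra identities plus a dimension count. Since $\mathcal{A}_m$ is the kernel of the nonzero linear functional $\mathbf{u}\mapsto\mathbf{1}_m^t\mathbf{u}$ on $\R^m$, it has dimension $m-1$. It therefore suffices to show (i) that every column of $\mathbf{N}$ lies in $\mathcal{A}_m$, i.e. $\mathbf{1}_m^t\mathbf{N}=\mathbf{0}^t$, and (ii) that the columns of $\mathbf{N}$ are orthonormal, i.e. $\mathbf{N}^t\mathbf{N}=\mathbf{I}_{m-1}$: then $\mathbf{N}$ supplies $m-1$ orthonormal (hence linearly independent) vectors inside the $(m-1)$-dimensional space $\mathcal{A}_m$, which must therefore be a basis of it. Throughout I read the block $b+\mathbf{I}_{m-1}$ as $b\,\mathbf{1}_{m-1}\mathbf{1}_{m-1}^t+\mathbf{I}_{m-1}$ (every entry equal to $b$, plus the identity), consistent with the displayed form of $\mathbf{B}$ in eq. (\ref{eq:B-matrix}).

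For (i), I would partition $\mathbf{1}_m^t=[\,1\ \ \mathbf{1}_{m-1}^t\,]$ and multiply block-wise, getting $\mathbf{1}_m^t\mathbf{N}=a\mathbf{1}_{m-1}^t+\mathbf{1}_{m-1}^t\bigl(b\,\mathbf{1}_{m-1}\mathbf{1}_{m-1}^t+\mathbf{I}_{m-1}\bigr)=\bigl(a+(m-1)b+1\bigr)\mathbf{1}_{m-1}^t$, using $\mathbf{1}_{m-1}^t\mathbf{1}_{m-1}=m-1$. Substituting $a=1/\sqrt m$ and $b=-\tfrac{1}{m-1}\bigl(1+\tfrac{1}{\sqrt m}\bigr)$ gives $a+(m-1)b+1=\tfrac{1}{\sqrt m}-\bigl(1+\tfrac{1}{\sqrt m}\bigr)+1=0$, so $\mathbf{1}_m^t\mathbf{N}=\mathbf{0}^t$.

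For (ii), I would expand $\mathbf{N}^t\mathbf{N}$ block-wise: the top block contributes $a^2\mathbf{1}_{m-1}\mathbf{1}_{m-1}^t$, and the bottom block contributes $\bigl(b\,\mathbf{1}_{m-1}\mathbf{1}_{m-1}^t+\mathbf{I}_{m-1}\bigr)^2=(m-1)b^2\,\mathbf{1}_{m-1}\mathbf{1}_{m-1}^t+2b\,\mathbf{1}_{m-1}\mathbf{1}_{m-1}^t+\mathbf{I}_{m-1}$. Hence $\mathbf{N}^t\mathbf{N}=\bigl(a^2+(m-1)b^2+2b\bigr)\mathbf{1}_{m-1}\mathbf{1}_{m-1}^t+\mathbf{I}_{m-1}$, and the whole claim collapses to the single scalar identity $a^2+(m-1)b^2+2b=0$. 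Plugging in the values of $a$ and $b$ and using $\bigl(1+\tfrac{1}{\sqrt m}\bigr)\bigl(\tfrac{1}{\sqrt m}-1\bigr)=\tfrac1m-1$, the left side becomes $\tfrac1m-\tfrac1m=0$.

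I do not expect a genuine obstacle here: this is a finite-dimensional computation. The only mild subtleties are (a) reading $b+\mathbf{I}_{m-1}$ correctly, and (b) noticing that the \emph{same} scalar quantity $a^2+(m-1)b^2+2b$ simultaneously governs the unit-norm and the orthogonality conditions in (ii) — which is exactly what pins down the particular values of $a$ and $b$. Once that scalar identity is checked, (i), (ii), and the dimension count complete the proof.
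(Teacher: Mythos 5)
Your proof is correct and takes essentially the same approach as the paper's: both reduce the claim to the two scalar identities $a+(m-1)b+1=0$ (columns sum to zero) and $a^{2}+(m-1)b^{2}+2b=0$ (orthonormality of the columns). The only difference is that you verify the stated values of $a$ and $b$ by direct substitution and make the dimension count of $\mathcal{A}_{m}$ explicit, whereas the paper derives those values by solving the resulting quadratic in $b$.
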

\begin{proof}
For $\mathbf{N}_{m\times\left(m-1\right)}$ to be an orthogonal basis
for $\mathcal{A}$, all that is required is that $\sum_{i=1}^{m}\mbox{\textbf{N}}_{ij}=0$
for all $j$ and that $\mathbf{N}^{t}\mathbf{N}=\mathbf{I}$. To lighten
the notation, define $r=m-1$. The first condition requires 
\begin{eqnarray}
a+rb+1 & = & 0\nonumber \\
a & = & -(1+rb)\label{eq:condition-ortho-basis-1}
\end{eqnarray}
and the second condition
\begin{eqnarray*}
\mathbf{N}^{t}\mathbf{N} & = & \mathbf{I}\\
a^{2}+rb^{2}+2b+\mathbf{I} & = & \mathbf{I}
\end{eqnarray*}

implying that 
\begin{equation}
a^{2}+rb^{2}+2b=0\label{eq:condition-ortho-basis2}
\end{equation}

injecting \ref{eq:condition-ortho-basis-1} into \ref{eq:condition-ortho-basis2},
we get:

\begin{eqnarray*}
(1+rb)^{2}+rb^{2}+2b & = & 0\\
\left(r+r^{2}\right)b^{2}+\left(2+2r\right)b+1 & = & 0
\end{eqnarray*}

This is a 2nd order polynomial in $b$, and it has real roots if:

\begin{eqnarray*}
\left(2+2r\right)^{2}-4(r+r^{2}) & > & 0\\
4+4r & > & 0
\end{eqnarray*}

which is true since $r=m-1$ is non-negative.

Solving the quadratic equations yields $b=-\left(\frac{1}{r}+\frac{1}{r\sqrt{m}}\right)$
and substituting into \ref{eq:condition-ortho-basis-1} yields $a=\frac{1}{\sqrt{m}}$.

We can therefore always find values $a,b$ such that $\mathbf{N}=\left[\begin{array}{c}
a\mathbf{1}_{m-1}^{t}\\
b+\mathbf{I}_{m-1}
\end{array}\right]$ is an orthogonal basis for the linear subset of zero-mean vectors.
Moreover, we can show that the columns of \textbf{N }have unit norm:

\begin{eqnarray*}
\sum_{i}\mathbf{N}_{ij}^{2} & = & a^{2}+\left(b+1\right)^{2}+\left(m-2\right)b^{2}\\
 & = & a^{2}+\left(m-1\right)b^{2}+2b+1\\
 & = & 1
\end{eqnarray*}

where the last line is from \ref{eq:condition-ortho-basis2}. The
result implies that we can compute an orthonormal basis $\mathbf{N}$
for $\mathcal{A}$ such that matrix-vector products with $\mathbf{N}$
are $\O\left(m\right)$. 
\end{proof}

\subsection{Derivative of the Laplace approximation\label{sub:Derivative-of-the-Laplace-approx}}

Using the implicit function theorem an analytical expression for the
gradient of the Laplace approximation can be found (see \citealp{RasmussenWilliamsGP},
for a similar derivation). The Laplace approximation is given by:

\begin{equation}
\mathcal{L}\left(\bt\right)=f(\xs,\bt)-\frac{1}{2}\log\det\mathbf{H}\left(\xs,\bt\right)\label{eq:laplace-appendix}
\end{equation}

where 
\begin{equation}
f(\mathbf{x},\bt)=\log p\left(\mathbf{y}\vert\mathbf{x}\right)+\log p\left(\mathbf{x}\vert\bt\right)\label{eq:logpost-appendix}
\end{equation}

is the unnormalised log-posterior evaluated at its conditional mode
$\xs$, and $\mathbf{H}\left(\xs,\bt\right)$ is the Hessian of $f(\mathbf{x},\bt)$
with respect to \textbf{x }evaluated at $\xs,\bt$. Since $\mathbf{x}^{\star}$
is a maximum it satisfies the gradient equation:

\begin{eqnarray*}
F\left(\mathbf{x},\bt\right) & = & 0\\
F(\mathbf{x},\bt) & = & \nabla\log p\left(\mathbf{y}\vert\mathbf{x}\right)+\nabla\log p\left(\mathbf{x}\vert\bt\right)
\end{eqnarray*}

Assuming that $f(\mathbf{x},\bt)$ is twice-differentiable and concave
in $\mathbf{x}$, we can define an implicit function $\xs\left(\bt\right)$
such that $F(\xs\left(\bt\right),\bt)=0$ for all $\bt$, with derivative
\begin{eqnarray}
\frac{\partial}{\partial\theta_{j}}\xs & = & -\left(\frac{\partial}{\partial\mathbf{x}}F(\mathbf{x},\bt)\right)^{-1}\left(\frac{\partial}{\partial\bt}F(\mathbf{x},\bt)\right)\nonumber \\
 & = & -\left(\mathbf{H}\left(\xs\left(\bt\right),\bt\right)\right)^{-1}\left(\bS^{-1}\left(\frac{\partial}{\partial\theta_{j}}\bS\right)\bS^{-1}\mathbf{x}\left(\bt\right)\right)\label{eq:dxstar}
\end{eqnarray}

To simplify the notation we will note $\mathbf{G}$ the matrix $\bS^{-1}\left(\frac{\partial}{\partial\theta_{j}}\bS\right)\bS^{-1}$.
Note that the same $\mathbf{G}$ matrix appears in the gradient of
the Gaussian likelihood with respect to the hyperparameters (eq. \ref{eq:grad-marginal-lik}).
To compute the gradient of $\mathcal{L}\left(\bt\right)$, we need
the derivatives of $f\left(\xs\left(\bt\right),\bt\right)$ with respect
to$\bt$:

\[
\frac{\partial}{\partial\bt}f=\left(\frac{\partial}{\partial\xs}f\left(\xs,\bt\right)\right)\left(\frac{\partial}{\partial\theta_{j}}\xs\right)+\frac{\partial}{\partial\theta_{j}}f\left(\xs\left(\bt\right),\bt\right)
\]

where the first part is 0 since the gradient of $f$ is 0 at $\xs\left(\bt\right)$,
and 
\[
\frac{\partial}{\partial\theta_{j}}f\left(\mathbf{x},\bt\right)=\frac{\partial}{\partial\theta_{j}}\left(\log p\left(\mathbf{y}\vert\mathbf{x}\right)+\log p\left(\mathbf{x}\vert\bt\right)\right)=\frac{\partial}{\partial\theta_{j}}\log p\left(\mathbf{x}\vert\bt\right)
\]

This is the gradient of a log-Gaussian density, and we can therefore
reuse formula \ref{eq:grad-marginal-lik} for that part. 

The gradient of $\log\det\mathbf{H}\left(\xs,\bt\right)$ is also
needed and is sadly more troublesome. A formula for the derivative
of the log-det function is given in \citet{PetersenPedersen:MatrixCookbook}:

\begin{equation}
\frac{\partial}{\partial\theta_{j}}\log\det\mbox{\textbf{H}}=\mbox{Tr}\left(\mathbf{H}^{-1}\frac{\partial}{\partial\theta_{j}}\mathbf{H}\right)\label{eq:gradient-logdet}
\end{equation}

The Hessian at the mode is the sum of the Hessian of $\log p\left(\mathbf{y}\vert\xs\right)$
(which depends on $\bt$ through the implicit dependency of $\xs$
on $\bt$), and the Hessian of $\log p\left(\mathbf{x}|\bt\right)$,
which equals the inverse covariance matrix $\bS^{-1}$. Some additional
algebra yields: 
\begin{eqnarray}
\frac{\partial}{\partial\theta_{j}}\mathbf{H} & = & \left(\mbox{diag}\left(\left(\frac{\partial}{\partial\theta_{j}}\mathbf{x}_{i}^{\star}\right)\frac{\partial^{3}}{\partial^{3}x_{i}}\log p\left(\mathbf{y}\vert\mathbf{x}\right)\right)-\bS^{-1}\left(\frac{\partial}{\partial\theta_{j}}\bS\right)\bS^{-1}\right)\label{eq:gradient-hessian}\\
 & = & \left(\mathbf{D}-\mathbf{G}\right)
\end{eqnarray}

where we have assumed that the Hessian of the log-likelihood is diagonal.
Accordingly $\mathbf{H}^{-1}$ is quasi-Kronecker, and so is $\frac{\partial}{\partial\theta_{j}}\mathbf{H}$
(since it equals the sum of a rQK matrix $-\mathbf{G}$ and a diagonal
perturbation $\mathbf{D}$). Computing the trace in eq. (\ref{eq:gradient-logdet})
is therefore tractable if slightly painful, and can be done by plugging
in eq. \ref{eq:QK-inverse} and summing the diagonal elements.

\subsection{Computing simultaneous confidence bands\label{sub:Computing-simultaneous-confidence-bands}}

In this section we outline a simple method for obtaining a Rao-Blackwellised
confidence band from posterior samples of $p(\bt|\mathbf{y})$. A
simultaneous confidence band around a latent function $f(x)$ is a
vector function $c_{\alpha}\left(x\right)=\left(\begin{array}{c}
h_{\alpha}\left(x\right)\\
l_{\alpha}\left(x\right)
\end{array}\right)$, such that, for all $x$:

\begin{eqnarray*}
p\left(f\left(x\right)>h_{\alpha}\left(x\right)|\mathbf{y}\right) & < & 1-\alpha\\
p\left(f\left(x\right)\leq l_{\alpha}\left(x\right)|\mathbf{y}\right) & < & \alpha
\end{eqnarray*}

In latent Gaussian models the posterior over latent functions is a
mixture over conditional posteriors: 
\begin{equation}
p\left(f\left(x\right)|\mathbf{y}\right)=\int p(f|\mathbf{y},\bt)p(\bt|\mathbf{y})\mbox{d}\bt\label{eq:conditional-post-appendix}
\end{equation}

The conditional posteriors $p\left(f\left(t\right)|\mathbf{y},\bt\right)$
are either Gaussian or approximated by a Gaussian. If we have obtained
samples from $p(\bt|\mathbf{y})$ (as in section \ref{sub:Gaussian-observations})
we can approximate (\ref{eq:conditional-post-appendix}) using a mixture
of Gaussians, which has an analytic cumulative density function. The
c.d.f. can be inverted numerically to obtain values $h_{\alpha}\left(t\right)$
and $l_{\alpha}\left(t\right)$.

\subsection{Preconditioning for quasi-Newton optimisation\label{sub:Preconditioning-for-quasi-Newton}}

In the whitened parametrisation the Hessian matrix has the following
form (eq \ref{eq:making-prior-diagonal})

\[
\mathbf{H}_{z}\left(\mathbf{z}\right)=\mathbf{I}+\mathbf{G}^{-t}\mathbf{H}_{l}\left(\mathbf{G}^{t}\mathbf{z}\right)\mathbf{G}^{-1}
\]

Although the whitened parametrisation gives ideal conditioning for
the prior term, the conditioning with respect to the likelihood term
($\mathbf{H}_{l}$) may be degraded. Preconditioning the problem can
help tremendously: given a guess $\mathbf{z}_{0}$, one precomputes
the diagonal values of $\mathbf{d}_{0}=\mbox{diag}\left(\mathbf{I}+\mathbf{G}^{-t}\mathbf{H}_{l}\left(\mathbf{G}^{t}\mathbf{z}_{0}\right)\mathbf{G}^{-t}\right)$
and uses $\mbox{diag}\left(\mathbf{d}_{0}\right)$ as a preconditioner.
To compute $\mathbf{d}_{0}$ the following identity is useful:

\[
\mbox{diag}\left(\mathbf{U}\mbox{diag}\left(\mathbf{v}\right)\mathbf{U}^{t}\right)=\left(\mathbf{U}\odot\mathbf{U}\right)\mathbf{v}
\]

where $\odot$ is the element-wise product. In LGMs $\mathbf{H}_{l}$
is diagonal and so:

\begin{equation}
\mathbf{d}_{0}=1+\left(\mathbf{G}^{-t}\odot\mathbf{G}^{-t}\right)\mathbf{G}^{t}\mathbf{z}_{0}\label{eq:precond-diag}
\end{equation}

We inject \ref{eq:whitening-transform} into the above and some rather
tedious algebra shows that $\mathbf{d}_{0}$ can be computed in the
following way. Form the matrix $\mathbf{X}_{0}$ by stacking successive
subsets of length $n$ from $\mathbf{G}^{t}\mathbf{z}_{0}$. Rotate
using the matrix $\mathbf{B}$ (section \ref{sub:Fast-rotations})
to form another matrix: 
\[
\mathbf{Y}=\mathbf{X}_{0}\mathbf{B}
\]

then compute $\mathbf{d}_{0}$ from:

\[
\mathbf{d}_{0}=\mbox{vec}\left(\left[\begin{array}{cc}
\left(\mathbf{U}\odot\mathbf{U}\right)\mathbf{Y}_{:,1} & \left(\mathbf{V}\odot\mathbf{V}\right)\mathbf{Y}_{:,2:m}\end{array}\right]\right)
\]

where we have used Matlab notation subsets of columns of $\mathbf{Y}$,
and $\mathbf{U}$ and $\mathbf{V}$ are as in Corrolary 4.

\subsection*{Acknowledgements}

The author wishes to thank Reinhard Furrer for pointing him to \citet{HeersinkFurrer:MoorePenroseInversesQuasiKronecker},
and Alexandre Pouget for support. 

\bibliographystyle{apalike}
\bibliography{ref}

\end{document}